\newtheorem{theorem}{Theorem}[section]
\newtheorem{lemma}[theorem]{Lemma}
\newtheorem{meta-theorem}[theorem]{Meta-Theorem}
\newtheorem{corollary}[theorem]{Corollary}
\newtheorem{observation}[theorem]{Observation}
\definecolor{darkgreen}{rgb}{0,0.5,0}
\crefname{theorem}{Theorem}{Theorems}
\Crefname{lemma}{Lemma}{Lemmas}
\Crefname{claim}{Claim}{Claims}
\Crefname{remark}{Remark}{Remarks}
\Crefname{observation}{Observation}{Observations}
\algnewcommand\algorithmicswitch{\textbf{switch}}
\algnewcommand\algorithmiccase{\textbf{case}}
\newcommand{\eps}{\varepsilon}
\newcommand{\poly}{\operatorname{\text{{\rm poly}}}}
\renewcommand{\paragraph}[1]{\vspace{0.15cm}\noindent {\bf #1.}}
\newcommand{\FullOrShort}{full}
  \newcommand{\fullOnly}[1]{#1}
  \newcommand{\shortOnly}[1]{}
    \newcommand{\fullOnly}[1]{}
    \newcommand{\IncludePictures}[1]{}
\begin{document}
\date{}
\title{Constant Approximation of Arboricity \\ in Near-Optimal Sublinear Time}
\author{
  Jiangqi Dai \\
  \small MIT \\
  \small jqdai@mit.edu
  \and
  Mohsen Ghaffari\\
  \small MIT \\
  \small ghaffari@mit.edu
  \and
  Julian Portmann \\
  \small ETH Zurich \\
  \small pjulian@ethz.ch
}

\maketitle

\begin{abstract}
We present a randomized algorithm that computes a constant approximation of a graph's arboricity, using $\tilde{O}(n/\lambda)$ queries to adjacency lists and in the same time bound. Here, $n$ and $\lambda$ denote the number of nodes and the graph's arboricity, respectively. The $\tilde{O}(n/\lambda)$ query complexity of our algorithm is nearly optimal. Our constant approximation settles a question of Eden, Mossel, and Ron [SODA'22], who achieved an $O(\log^2 n)$ approximation with the same query and time complexity and asked whether a better approximation can be achieved using near-optimal query complexity. 

A key technical challenge in the problem is due to recursive algorithms based on probabilistic samplings, each with a non-negligible error probability. In our case, many of the recursions invoked could have bad probabilistic samples and result in high query complexities. The particular difficulty is that those bad recursions are not easy or cheap to detect and discard. Our approach runs multiple recursions in parallel, to attenuate the error probability, using a careful \textit{scheduling mechanism} that manages the speed at which each of them progresses and makes our overall query complexity competitive with the single good recursion. We find this usage of parallelism and scheduling in a sublinear algorithm remarkable, and we are hopeful that similar ideas may find applications in a wider range of sublinear algorithms that rely on probabilistic recursions.
\end{abstract}
\thispagestyle{empty}

\bigskip
{
\newpage
\hypersetup{linkcolor=blue}
\tableofcontents
\thispagestyle{empty}
}
\newpage
\setcounter{page}{1}

\section{Introduction and Related Work}
This paper is centered on sublinear-time approximation algorithms for the highest density/sparsity in a graph, which captures a fundamental and important property of graphs. Several closely related definitions formalize the density measure, and they are all within a constant factor, as we outline next.
The \textit{arboricity} $\lambda(G)$ of a graph $G=(V, E)$ is defined as the minimum number of forests into which the edges $E$ can be partitioned.
By a well-known result of Nash-Williams~\cite{nash1961edge, nash1964decomposition, tutte1961problem}, an equivalent definition is $\lambda(G)=\max_{S\subseteq V} \{\lceil\frac{|E(S)|}{|S|-1}\rceil\}$, where $E(S)$ denotes the set of edges with both endpoints in $S$.
The \textit{maximum subgraph density} is defined as $\mathsf{dens}(G) = \max_{S\subseteq V} \{\frac{|E(S)|}{|S|}\}$.
One can see that $\mathsf{dens}(G) \leq \lambda(G) \leq \mathsf{dens}(G)+1$.
The degeneracy $\mathsf{degen}(G)$ is defined as the smallest value $k$ such that any induced subgraph of $G$ includes a vertex of degree at most $k$.
One can see that $\lambda(G) \leq \mathsf{degen}(G) \leq 2\lambda(G)-1$.
There are also other measurements, which are all within a constant factor of these bounds and have appeared under names such as the $k$-core number, width, linkage, thickness, Seekers–Wilf number, etc.
The arboricity will be the primary measure in our presentation of the results. Since all these measures are within a constant factor of arboricity, our approximation results apply to them as well.

Arboricity and the other density measures discussed above have numerous (algorithmic) applications; see, e.g., \cite{barenboim2010sublogarithmic,solomon2020improved, kothapalli2011distributed, bera2022counting,ghaffari2019distributed, chiba1985arboricity, brodal1999dynamic, bressan2021faster, bera2020linear}.
In particular, several sublinear algorithms assume that they receive upper bounds on the arboricity as a part of the input; see, e.g., \cite{eden2019arboricity, eden2019sublinear, eden2020faster,eden2022almost}.
We refer to \cite{eden2022approximating} for more on applications.

\medskip
\paragraph{Computing the arboricity}
It is well-known that the arboricity can be computed exactly in polynomial time. A classic exact algorithm due to Gabow runs in $\tilde{O}(m^{3/2})$ time~\cite{gabow1998algorithms}.
Furthermore, a $2$-approximation can be computed in $O(m+n)$ time by a simple algorithm of Matula and Beck~\cite{matula1983smallest}, which iteratively removes the lowest degree vertex until all vertices are removed and then reports the highest degree at the removal time (a bucket queue data structure is used to efficiently maintain the degrees). Such schemes are often called \textit{iterative peeling}. The approximation can be improved to $1+\epsilon$, for any constant $\eps>0$, with runtime $\tilde{O}(n)$, cf.\cite{king2022computing, mcgregor2015densest}.
In this work, our focus is on sublinear-time algorithms and we want to understand the minimal number of queries and time needed to determine or approximate graph's arboricity.

\subsection{Setup}
\paragraph{Model for sublinear-time algorithms} We consider the incidence-list query model: We assume vertices are numbered $\{1, 2, \dots, n\}$.
For any $v\in \{1, 2, \dots, n\}$ and any $i \in [1, n-1]$, the query $neighborQuery(v, i)$ returns the $i^{th}$ neighbor of the vertex $v$, if it has at least $i$ neighbors, and $\bot$ otherwise.
Notice that if the input graph is provided as one adjacency array for each vertex, which describes its neighbors, then this query can be performed with $O(1)$ cost.
We also assume access to a degree query $degreeQuery(v)$ which returns the degree of vertex $v$. We comment that it suffices to assume only the neighbor queries, if we ignore one logarithmic factor in the query complexity: one can answer the degree query for each node using $O(\log n)$ neighbor queries, via binary search.

\smallskip
\noindent \textbf{How many queries are needed to approximate the arboricity?}
Let us first discuss the lower bound.
Suppose there is a small but dense subgraph, which determines the arboricity.
For instance, consider a graph with $\Theta(n/\lambda)$ connected components, where one component is a clique of size $\lambda+1$, and all other components are graphs of arboricity somewhat below $\lambda$.
Informally, even to have a constant chance to pick a vertex in this clique, we need to query at least $\Theta(n/\lambda)$ vertices.
A formal lower bound appears in \cite{bhattacharya2015space} and (with some parameter adjustments, as done in ~\cite{eden2022approximating}) shows that for any $\poly(\log n)$-approximation (with at least $2/3$ success probability), we need $\tilde{\Omega}(n/\lambda)$ queries.
More generally, $k$-approximation needs $\Omega(n/(k\lambda))$ queries.
Hence, aiming for small approximation factors, the benchmark for us is query complexity $\tilde{O}(n/\lambda)$.

What is the best possible approximation factor within this $\tilde{O}(n/\lambda)$ query complexity bound?

\subsection{EMR, and our result}
In an elegant recent work, Eden, Mossel, and Ron~\cite{eden2022approximating} presented an $O(\log^2 n)$ approximation algorithm which has the near-optimal query complexity of $\tilde{O}(n/\lambda)$.
More concretely, their algorithm outputs a value $\hat{\lambda}$ such that $\lambda/200\log^2 n \leq \hat{\lambda} \leq \lambda$, with probability at least $1-1/n^2$.
They then posed this open question:
\begin{center}
	\begin{minipage}[t]{0.90\linewidth}
		``\textit{A natural question is whether the $O(\log^2 n)$ approximation factor can be improved with similar time complexity.}''
	\end{minipage}
\end{center}

\paragraph{Our result}
We show a simple algorithm that improves the approximation to $O(1)$ while retaining the near-optimal query complexity of $\tilde{O}(n/\lambda)$, and time complexity of $\tilde{O}(n/\lambda)$. 
\begin{theorem}
	\label[theorem]{thm:Main}
	There exists a constant $c\geq 2$ for which the following statement holds: There is a randomized algorithm in the incidence-list query model that, for any $n$-vertex graph $G=(V, E)$, outputs a value $\hat{\lambda}$ such that we have $\lambda/c \leq \hat{\lambda} \leq \lambda$, with high probability.\footnote{Throughout, and as standard, we use the phrase \textit{with high probability} (w.h.p.) to indicate that an event happens with probability at least $1-1/\poly(n)$, where the constant in the exponent of $\poly(n)$ can be made desirably large by adjusting other constants in the asymptotic notations.}
	Here, $\lambda$ denotes the arboricity of the graph $G$.
	The algorithm has query and time complexity $\tilde{O} (n / \lambda)$.
\end{theorem}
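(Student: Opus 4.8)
The plan is to reduce the problem to a sequence of decision subproblems of the form ``is the arboricity at least $t$?'' for geometrically spaced guesses $t \in \{1, 2, 4, \dots, n\}$, and to combine an iterative-peeling idea with vertex sampling. Concretely, if the arboricity is $\lambda$, then by the Nash--Williams / maximum-subgraph-density characterization there is a vertex set $S$ with $|E(S)|/|S| = \Omega(\lambda)$; a constant fraction of vertices of $S$ have degree $\Omega(\lambda)$ within $S$, and iteratively peeling away low-degree vertices (with respect to the current residual graph, using threshold $\Theta(t)$) either exhausts the graph --- certifying $\lambda = O(t)$ --- or leaves a nonempty ``dense core'' --- certifying $\lambda = \Omega(t)$. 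The first step is to show that this peeling process can be \emph{simulated locally}: to decide whether a sampled vertex $v$ survives the peeling at threshold $t$, one only needs to explore a bounded-depth neighborhood of $v$, and (crucially) one can recurse on neighbors to determine whether \emph{they} have already been peeled, truncating the recursion when the residual degree provably drops below threshold. Sampling $\tilde{O}(n/t)$ vertices and testing each for survival then gives, w.h.p., a correct answer to the decision question, since if a dense core of the relevant size exists the sample hits it.

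The second step is the query-complexity accounting, which is where the main obstacle lies and which motivates the scheduling/parallelism machinery advertised in the abstract. A naive recursive peeling-simulation from a single sampled vertex can blow up: each level of the recursion may branch into $\Theta(t)$ neighbors, and the probabilistic subsampling used inside each recursive call to keep per-call cost low has a non-negligible failure probability, so some recursive branches are ``bad'' and explore far more than their share. The difficulty the authors emphasize is that a bad branch is not cheaply detectable --- you cannot just time it out, because a legitimately expensive-but-correct branch looks the same from the outside. The resolution I would pursue (and which I expect the paper to carry out) is to run many independent copies of each recursive call \emph{in parallel} and advance them in lockstep under a scheduler that caps the rate at which any one copy consumes queries; because a good sample occurs with constant probability at each recursion node, with high probability at least one copy at each node behaves well, and the scheduler ensures the total cost is dominated by the cost of the good copy up to $\poly(\log n)$ overhead. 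Formalizing that the scheduler's bookkeeping only inflates the query count of the unique ``all-good'' execution path by a $\poly(\log n)$ factor --- while simultaneously attenuating the overall error probability to $1/\poly(n)$ via a union bound over all $O(\log n)$ thresholds and all $\tilde{O}(n/t)$ sampled start vertices --- is the technical heart of the argument.

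The third step is to stitch the decision procedure into an approximation algorithm and handle the unknown value of $\lambda$. Since $\lambda$ is exactly the quantity we are after, the $\tilde{O}(n/\lambda)$ budget is not known in advance; the standard fix is to run the decision test for the largest threshold first, or equivalently to do a doubling search over $t$ from large to small (or a binary search over the $O(\log n)$ scales), stopping at the transition point. For each guessed $t$ the algorithm only spends $\tilde{O}(n/t)$ queries, so the total over a geometric sequence of guesses down to the true scale is a geometric series dominated by its last term, namely $\tilde{O}(n/\lambda)$; the time complexity matches since every query and every scheduler step is $\poly(\log n)$ work. Combining the per-threshold correctness (from steps 1--2) with the fact that the peeling test at threshold $\Theta(\lambda)$ separates ``arboricity $\Omega(\lambda)$'' from ``arboricity $O(\lambda)$'' up to a constant factor, the reported value $\hat\lambda$ satisfies $\lambda/c \le \hat\lambda \le \lambda$ w.h.p.\ for an absolute constant $c$, giving Theorem~\ref{thm:Main}. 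I expect essentially all the real work --- and all the novelty --- to be concentrated in step two: making the parallel-recursion-with-scheduling argument precise enough to pin down the $\poly(\log n)$ overhead and the error union bound simultaneously.
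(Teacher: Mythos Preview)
Your high-level outline matches the paper's architecture (geometric search over thresholds, local simulation of iterative peeling on $\tilde O(n/\lambda)$ sampled vertices, parallel copies of each recursive call to tolerate bad samples), and you are right that step two is where all the content lies. But your description of step two contains a real gap. You propose to ``advance [the copies] in lockstep under a scheduler that caps the rate'' and assert this yields ``$\poly(\log n)$ overhead'' relative to the good copy. Neither part works. First, the paper explicitly explains (and gives an example) that lockstep scheduling fails: if all $L=\Theta(\log n)$ copies run at the same speed, the overhead per recursion level is $\Theta(L)$, not $O(1)$. Second, and more fundamentally, a $\poly(\log n)$ overhead per recursion level is fatal, because these overheads \emph{multiply} across the (potentially $\Theta(\log n)$-deep) recursion; you need the per-level overhead to be an absolute constant strictly compensated by the $1/\Theta(\lambda)$ neighbor-sampling rate, so that the recursive cost inequality contracts geometrically (this is exactly the content of the paper's \Cref{lem:recursiveT} and the inequality labeled \eqref{pointedInequality}).

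The missing idea is the \emph{staggered} scheduling: copy (or batch) $t$ is run at speed $2^{-(t-1)}$, so that if the first good copy has index $t^*$, the total work when it finishes is at most $2^{t^*}$ times its own work; since $\Pr[t^*=t]\le q^{t-1}$ for a small constant $q$, the expected blowup is $\sum_t q^{t-1}2^t = O(1)$. For the full $O(1)$ approximation the paper adds a further twist you do not mention: the $t$-th batch contains $2t-1$ independent copies of the single-shot process and is declared terminated by \emph{majority}, which is what drives the non-termination probability in the high-arboricity case down geometrically in $t$ (your proposal gives no mechanism for why a dense-core vertex resists termination across all the parallel copies). Without both the exponential staggering and the batch/majority structure, your step two does not close.
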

This constant approximation of arboricity in near-optimal query and time complexity essentially settles the question of Eden, Mossel, and Ron~\cite{eden2022approximating}, modulo figuring out the best possible constant. Our write-up does not attempt to optimize the approximation constant and prioritizes simplicity and readability. However, we think that even a fully optimized version of this algorithm would not approach an approximation close to $2$, and for that further ideas might be needed. 

\subsection{Method}
Here, we provide a bird's eye view of our approach and highlight how it differs from that of Eden et al.~\cite{eden2022approximating}. We note that this discussion will be at a very high level and will not enter the particular details, which are more nuanced.

\paragraph{Iterative peeling as a basic approach to arboricity} As mentioned above, a common ingredient in arboricity algorithms is a \textit{peeling mechanism}, which iteratively removes low-degree vertices in the subgraph induced by the remaining vertices. The most basic version, originally described by Matula and Beck~\cite{matula1983smallest}, is to remove each time the node with the minimum degree in the remaining graph, and eventually report the maximum degree at removal time as an approximation of the arboricity (one can see that this gives a $ 2$-approximation). If $\lambda$ is given as input as a hypothetical upper bound on the arboricity (which should be tested), this process can be alternatively described as iteratively removing vertices of degree at most $2\lambda$ in the remaining graph.

The base structure used in the algorithm of Eden, Mossel, and Ron~\cite{eden2022approximating} is a simple variant of the above, used earlier by Barenboim and Elkin in distributed algorithms~\cite{barenboim2010sublogarithmic}: In any graph with arboricity $\lambda$, one can partition the vertices into layers $L_1$, $L_2$, \dots, $L_{\log n}$ such that each node $v\in L_i$ has at most $4\lambda$ neighbors in \textit{the same or higher layers} $\bigcup_{j\geq i} L_j$. The process is simple: for $\log n$ iterations, in each iteration $i$ remove all nodes of degree at most $4\lambda$ from the graph and put them in $L_i$.  In graphs with higher arboricity, e.g., above $5\lambda$, there is an induced subgraph with minimum degree $5\lambda$ and thus some nodes cannot be put in such a layering. 

\paragraph{The approach of Eden et al.~\cite{eden2022approximating}, with cost estimations} 
The algorithm of Eden et al. devises an elegant scheme to simulate the $O(\log n)$-iteration peeling process in sublinear time. They build a query mechanism $Q_i(v)$ for each layer $i\in[1, \log n]$ which can determine, with certain approximate guarantees, whether node $v$ is in $\cup_{j\leq i} L_j$ or not. These queries will be applied on a sampling of the vertices in the graph, and if one of them is not in any layer, that is an indication that the graph has high arboricity. The query $Q_i(v)$ itself is built using the query mechanism of the lower layer $Q_{i-1}()$, invoked on a $(1/\lambda)$-probability sampling of the neighbors of $v$. The idea is that the number of sampled neighbors for which $Q_{i-1}(v)$ answers negative should be an estimator of the remaining degree of $v$ once nodes of $L_1$ to $L_{i-1}$ are removed, thus allowing to approximately determine the membership of $v$ in $L_{i}$. A challenge in their design is that calling $Q_{i-1}()$ on nodes that belong to higher layers $L_{i'}$ for $i'>i$ can be quite expensive (e.g., as their degrees can be much higher), and a priori it is unknown which neighbors of $v$ belong there. Eden et al. remedy this by computing/estimating the cost of $Q_{i-1}()$ for each sampled neighbor, in a cheaper way than running the procedure, and they \textit{prune} the $O(\log n)$ most costly neighbors. Notice that with high probability at most $O(\log n)$ neighbors in higher layers are sampled. In this scheme, when simulating each layer, we prune $O(\log n)$ (sampled) neighbors. Therefore, for $\Theta(\log n)$ layers of the simulation, we may prune up to $O(\log^2 n)$ (sampled) neighbors. In a rough sense, the scheme gets bottlenecked at a $\Theta(\log^2 n)$ approximation because of this.

\paragraph{Our approach, and scheduling parallel recursive calls} Our work is also based on the idea of simulating iterative peeling and testing whether all nodes can be peeled or not. Besides this basic point, our algorithm departs significantly from the approach of Eden et al. In particular, we do not use the $\Theta(\log n)$-layer structure mentioned above. More importantly, we have no mechanism to compute or estimate costs for nodes and we do not prune neighbors (while making the problem much more manageable technically, these estimations and prunings seem to be the source of approximation losses, in our understanding of the approach of Eden et al.). 

Our algorithm simulates iterative peeling using a careful probabilistic recursive structure that runs several recursions in parallel. Concretely, we will attempt to determine whether each node can be peeled or not by invoking several recursive peeling subroutines on random samplings of the neighbors (the exact recursions are more detailed). Crucially, each recursive process will have a non-negligible probability of having ``bad" samples and becoming costly. We do not have a mechanism to avoid, or significantly attenuate, this error probability. Instead, our approach relies on a recursion structure and careful scheduling rules that allow us to cope with recursions that have bad samples. The purpose of running many recursions in parallel is to have a high probability that at least one recursion has a ``good" sample and incurs only a small cost. However, it will not be known in advance which one of these multiple recursions is good, and the ``bad" ones may incur large complexities. As such, running these recursions sequentially one after the other, or running all of them naively in parallel at the same speed, will prove costly. 

We will use careful \textit{scheduling schemes} to remedy this. This is indeed a critical novelty in our paper. Our scheduling runs the multiple recursive calls in parallel to each other, with a very careful timing rule that determines the number of steps taken in each call at each moment of time. At a very high level, the scheduling staggers the recursions and takes exponentially more steps in the earlier recursive calls. In the analysis, we show how the overhead coming from running many recursive calls simultaneously is offset by considering the probabilities of the first recursive call that has a good probabilistic sample. 

Due to the nuanced nature of the algorithm, we do not attempt to summarize the details here. Fortunately, the algorithm is quite short and therefore we refer instead to the technical section.

\paragraph{Note on the scheduling} We emphasize that the careful scheduling mentioned above is critically necessary in our scheme. In particular, one cannot run the recursions one after the other, as the earlier ones might turn out to be the costly ones, while being hard to detect. Furthermore, even considering parallel invocations,
the simpler and far more natural idea of running the parallel recursions at the same time does not yield sublogarithmic approximations. Please see \Cref{subsec:loglogAlgorithm} for more on this, with a concrete example. We find it remarkable and unique that parallelism and scheduling rules are needed in such a seemingly inherent manner in a sublinear algorithm (where we do not have even the time to verify a solution otherwise). We are hopeful that our idea of parallel recursions with staggered scheduling might find applications in other sublinear algorithms, whenever one needs to invoke multiple \textit{probabilistic} recursive calls, some of which are erroneous and costly, and there are no cheap alternatives for identifying and terminating those costly recursions.

\paragraph{Roadmap} Even though the algorithm is short, its probabilistic recursive structure and especially its scheduling rule for parallel recursions can make the algorithm's design look unintuitive. To help accessibility, we first provide an $O(\log\log n)$ approximation algorithm in \Cref{sec:loglog} as a warm up and then present our $O(1)$ approximation algorithm in \Cref{sec:algorithm}.


\subsection{Other related work}
\paragraph{Tolerant testers for arboricity} Eden, Levi, and Ron~\cite{eden2020testing} study \emph{tolerant testing} of bounded arboricity. They give an algorithm to distinguish the following two cases in $\tilde{O} \left( \frac{n}{\gamma \sqrt{m}} + (1/\gamma)^{O(\log(1/\gamma))} \right)$ time: (A) graphs that are $\gamma$-close to having arboricity at most $\alpha$, i.e., graphs for which at most a $\gamma$-fraction of edges needs to be removed to achieve arboricity at most $\alpha$, and (B) graphs that are $20 \gamma$-far from having arboricity at most $3 \alpha$, i.e., graphs for which more than a $20 \gamma$-fraction of edges needs to be removed to achieve arboricity at most $3 \alpha$. Their algorithm can also compute a constant factor estimate of what they call the \emph{corrected arboricity}, which is the smallest arboricity of any graph that is $\gamma$-close to the input graph.
However, we note that this corrected arboricity does not provide an approximation of the (standard) arboricity, as also pointed out in \cite{eden2022approximating}.
For instance, a tree is $\gamma$-close to a graph with arboricity $\Omega(\sqrt{n})$, since by adding a $\gamma$-fraction of edges, we can create an $\Omega(\sqrt{n})$-size clique.

\paragraph{Prior work using scheduling in sequential algorithms} We would like to point out that there are classic works in complexity theory that use a scheduling idea of a general flavor similar to ours. In particular, Jones\cite{jones1997computability} shows that for any search problem, any optimal verification algorithm for it can be used to construct an optimal algorithm for it. Consider a search problem for which there exists an optimal verification algorithm with running time $\mathrm{Ver}(n)$. Jones' algorithm enumerates programs $P_1, P_2, \dots$ and executes them with the following schedule: for each $i \geq 1$, for every two operations performed by program $P_i$, program $P_{i+1}$ performs one operation. Whenever any program halts, its output is verified using the verifier. The overall algorithm terminates once the verifier confirms the correctness of some output. Let $P_C$ be the optimal program for the problem with running time $\mathrm{Opt}(n)$ and notice that $C$ is a pure constant (though presumably astronomically large, depending on the machine model and the listing of all programs). The overall algorithm has complexity at most $2^{C+1} \cdot (\mathrm{Opt}(n) + \mathrm{Ver}(n))$ steps. Since $C$ is a constant and $\mathrm{Ver}(n) = O(\mathrm{Opt}(n))$, this overall algorithm's running time is within a constant factor of $\mathrm{Opt}(n)$.

While the flavor of the scheduling is similar to ours, there are crucial differences, and our scheduling is striking due to its usage in sublinear (probabilistic and recursive) algorithms. In particular, Jones's algorithm is in no way applicable to our setting, as we are in a sublinear algorithms regime and there is no such verifier for our arboricity approximation problem. Indeed, building an approximate ``verifier"---a YES/NO tester whether arboricity is above or below a threshold, with some slack---is the whole point of our algorithm. Moreover, with Jones's algorithm, we do not have any new information about the actual complexity of the problem at hand; just that an asymptotically ``optimal" algorithm exists. Indeed, had it not been for this reason, the whole research field on algorithms (for search problems in which the certificate is easy to verify, which covers a vast range of the problems of interest) would be mute, as an asymptotically ``optimal" algorithm is already ``known". In contrast, we give a precise analysis and bound our algorithm's complexity (in particular, analyzing the overhead caused per recursion level, due to the scheduling and the probabilistic errors of the recursions).

\section{Preliminaries}
\label{sec:overview}

\paragraph{Notations} We work with a graph $G=(V, E)$, with $n=|V|$ vertices and $m=|E|$ edges, and we use $\lambda(G)$ to denote the arboricity of $G$. We use the symbol $\lambda$ as a free parameter in $[1, n]$, with the mindset that the most relevant problem is to compare $\lambda(G)$ and $\lambda$. For a subset $S\subseteq V$, we use $G[S]=(S, E_S)$ to denote the subgraph induced by $S$, i.e., the subgraph where we keep vertices in $S$ and only edges whose both endpoints are in $S$. For each node $v$, $N(v)$ denotes the set of its neighbors, and $deg(v)=|N(v)|$.

\medskip
\paragraph{Approximating via comparators} Approximation of arboricity boils down to building a \textit{comparator} of the graph's arboricity  $\lambda(G)$ against some given threshold $\lambda\in [1, n]$. In a $c$-approximate comparator against $\lambda$, we want the comparator to say \textit{YES} if $\lambda(G) \leq \lambda$ and say \textit{NO} if $\lambda(G)\geq c\lambda$. In other cases, we do not need a guarantee from the comparison's output. In our algorithm, we build this comparator using $\tilde{O}(n/\lambda)$ queries. Then, the approximation algorithm follows by just running the comparator for $\lambda=n/2^i$ for $i=0, 1, 2, \ldots$ until we reach the first value of $\lambda$ such that the comparator outputs YES. This gives a $2c$ approximation of $\lambda(G)$ using $\tilde{O}(n/\lambda(G))$ queries. 

A basic approach for this comparator (which can be done in roughly linear time in the graph size) is via iteratively peeling low-degree vertices, as we recall next.

\medskip
\paragraph{Iterative peeling} Iteratively and sequentially remove vertices of degree at most $2\lambda$ from the graph, along with all of their edges, forming a sequence $X=(x_1, x_2, x_3, \dots, x_k)$ of removed nodes, ordered according to the removal time, until no further node can be removed. Here $k$ is the total number of nodes that got removed when the process stopped. We comment that, for simplicity, we assume that $2\lambda$ is an integer. The $\pm 1$ rounding loss caused when this is not satisfied does not invalidate the constant approximation. 
\begin{observation} Suppose we orient edges incident on nodes in $X$ toward the endpoint with the higher index if both endpoints are in $X$ (i.e., from $x_i$ to $x_j$ where $j>i$), and toward the endpoint in $V\setminus X$ if one endpoint is outside $X$. Then each node in $X$ has outdegree at most $2\lambda.$
\end{observation}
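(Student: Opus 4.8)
The plan is to observe that the out-edges of a node $x_i\in X$ are precisely the edges incident on $x_i$ that are still present in the graph at the moment $x_i$ is removed, and then invoke the removal criterion. First I would fix $x_i$ and describe the state of the process right before $x_i$ is removed: at that point the removed nodes are exactly $x_1,\dots,x_{i-1}$, so the remaining graph is the subgraph induced on $V\setminus\{x_1,\dots,x_{i-1}\} = \{x_i,x_{i+1},\dots,x_k\}\cup (V\setminus X)$. The removal rule says $x_i$ is removed because its degree in this induced subgraph is at most $2\lambda$, i.e., $x_i$ has at most $2\lambda$ neighbors in $\{x_{i+1},\dots,x_k\}\cup(V\setminus X)$.

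Next I would match these neighbors up with the out-edges under the stated orientation. An edge $\{x_i,x_j\}$ with $j>i$ is oriented from $x_i$ to $x_j$, hence contributes to $x_i$'s outdegree; an edge $\{x_i,u\}$ with $u\in V\setminus X$ is oriented from $x_i$ toward $u$, hence also contributes to $x_i$'s outdegree; and an edge $\{x_j,x_i\}$ with $j<i$ is oriented from $x_j$ into $x_i$, so it contributes to $x_i$'s indegree, not its outdegree. Therefore the out-neighbors of $x_i$ are exactly its neighbors in $\{x_{i+1},\dots,x_k\}\cup(V\setminus X)$, which by the previous paragraph number at most $2\lambda$. Since $x_i\in X$ was arbitrary, every node in $X$ has outdegree at most $2\lambda$.

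I expect no real obstacle here; the only point requiring a little care is that nodes of $V\setminus X$ are never removed during the process, so every edge from $x_i$ to $V\setminus X$ is indeed present at the time $x_i$ is removed and is correctly accounted for by the degree bound. (Edges from $x_i$ to $x_j$ with $j<i$ have already disappeared by then, consistent with their being oriented into $x_i$ and not counted in the outdegree.) One could phrase the whole argument more succinctly by noting that the orientation is exactly "orient each edge from the endpoint removed earlier to the endpoint removed later (with $V\setminus X$ treated as removed last)," so an out-edge of $x_i$ witnesses a neighbor of $x_i$ surviving in the graph at $x_i$'s removal time, and there are at most $2\lambda$ such.
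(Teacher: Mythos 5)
Your proof is correct: identifying the out-neighbors of $x_i$ with its neighbors surviving in the induced subgraph at the moment of its removal, and invoking the degree-at-most-$2\lambda$ removal criterion, is exactly the intended argument. The paper states this as an observation without proof, and your write-up supplies precisely the justification the authors had in mind, including the correct handling of edges into $V\setminus X$ (whose endpoints are never removed and hence always count toward $x_i$'s remaining degree).
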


\begin{observation}
If $\lambda(G)\leq \lambda$, we would have $k=n$ meaning that all nodes get removed. If $\lambda(G)>2\lambda$, then we would have $k\leq n-2\lambda-1<n$.    
\end{observation}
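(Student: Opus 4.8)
The plan is to establish both halves directly from the Nash-Williams characterization $\lambda(H)=\max_{S\subseteq V(H)}\lceil |E(S)|/(|S|-1)\rceil$ recalled in the introduction, together with the elementary monotonicity $\lambda(G[S])\le\lambda(G)$ of arboricity under induced subgraphs (immediate from the characterization, or from the forest-partition definition). In effect, this is the classical correctness argument for Matula-Beck peeling, rephrased with the parameter $\lambda$ in place of the max-degree bookkeeping.

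First I would handle the case $\lambda(G)\le\lambda$ by showing the peeling can never get stuck before $V$ is exhausted. At any moment let $S$ be the set of surviving vertices; if $S\neq\emptyset$ I claim $G[S]$ has a vertex of degree at most $2\lambda$. This is trivial if $|S|=1$; if $|S|\ge 2$, then by Nash-Williams applied to $G[S]$ and monotonicity we get $|E(S)|\le\lambda(|S|-1)$, so the average degree $2|E(S)|/|S|$ in $G[S]$ is strictly below $2\lambda$, hence some vertex has degree at most $2\lambda$. Thus a removable vertex always exists, the process halts only when $S=\emptyset$, and $k=n$.

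For $\lambda(G)>2\lambda$, since arboricity is an integer we have $\lambda(G)\ge 2\lambda+1$, so Nash-Williams yields a set $S$ with $|E(S)|>2\lambda(|S|-1)$, i.e.\ $|E(S)|\ge 2\lambda(|S|-1)+1$. The key point is that peeling cannot remove many vertices of $S$: when a vertex $v\in S$ is removed, its degree in the current graph---hence its degree inside the surviving part of $G[S]$---is at most $2\lambda$, so at most $2\lambda$ edges of $E(S)$ are destroyed. Therefore, if $S''\subseteq S$ is the set of surviving $S$-vertices after $j$ removals from $S$, the invariant $|E(S'')|\ge 2\lambda(|S''|-1)+1$ persists. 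Comparing with $|E(S'')|\le\binom{|S''|}{2}$ forces $|S''|>4\lambda$ throughout (in particular $S''$ stays non-empty), so at termination at least $4\lambda+1>2\lambda+1$ vertices of $S$ are never removed, giving $k\le n-2\lambda-1<n$. A tidy alternative for this half is to use $\mathsf{degen}(G)\ge\lambda(G)\ge 2\lambda+1$ to extract an induced subgraph of minimum degree at least $2\lambda+1$, all of whose vertices survive the whole peeling (the first one to be removed would still see all of its neighbors, a contradiction).

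There is no real obstacle here---the statement is standard---so the proof is essentially a matter of being careful with two points: the integrality assumption on $2\lambda$ that the paper makes, and the small arithmetic that converts ``each removal destroys at most $2\lambda$ edges of the dense core'' into the stated bound (which is in fact loose, since one actually obtains $k\le n-4\lambda-1$).
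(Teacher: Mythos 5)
Your proof is correct. The first half (if $\lambda(G)\le\lambda$ then $k=n$) is the same argument as the paper's: Nash--Williams plus monotonicity gives $|E(S)|\le\lambda(|S|-1)$ for every surviving set $S$, so a vertex of degree at most $2\lambda$ always exists. For the second half your primary route differs from the paper's. The paper takes a \emph{minimal} set $U$ violating the Nash--Williams bound, observes that minimality forces $G[U]$ to have minimum degree at least $2\lambda+1$ (hence $|U|\ge 2\lambda+1$), and concludes that no vertex of $U$ is ever peelable --- which is exactly the ``tidy alternative'' you mention at the end. Your main argument instead runs an edge-counting invariant through the peeling: each removal of an $S$-vertex destroys at most $2\lambda$ edges of the surviving part of $E(S)$, so $|E(S'')|\ge 2\lambda(|S''|-1)+1$ persists, and comparing with $\binom{|S''|}{2}$ forces $|S''|>4\lambda$ forever. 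Both are sound; the paper's static argument is shorter and directly exhibits the ``core'' subgraph that the rest of the paper reuses (the induced subgraph of minimum degree above $2\lambda$), whereas your dynamic argument avoids the minimality step, yields the slightly stronger bound $k\le n-4\lambda-1$, and is more robust in spirit (it bounds how much of \emph{any} dense set peeling can erode, not just a minimal one). One small point to keep explicit in a write-up: the invariant is false at $|S''|=1$, and since $|S''|$ decreases by one per removal it can therefore never drop below $2$, which is what licenses the comparison with $\binom{|S''|}{2}$ and the conclusion that $S''$ stays nonempty.
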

\begin{proof}
 First, we argue that if $\lambda(G)\leq \lambda$, then $k=n$. The reason is that for any nonempty set $S\subseteq V$ of the remaining nodes, the induced subgraph $G[S]$ has at most $\lambda (|S|-1)$ edges since $\lambda(G)\leq \lambda$. Hence, there is a node of degree at most $2\lambda$ in $G[S]$ that gets removed. 
 
 Second, we argue that if $\lambda(G)>2\lambda$, we would have $k\leq n-2\lambda-1<n$. The reason is as follows: Recall that by results of Nash-Williams\cite{nash1961edge}, we have $\lambda(G)=\max_{S\subseteq V} \{\lceil\frac{|E(S)|}{|S|-1}\rceil\}$. If $\lambda(G)>2\lambda$, considering the integrality of $\lambda(G)$, there exists a subset $S\subseteq V$ such that $\lceil\frac{|E(S)|}{|S|-1}\rceil\geq 2\lambda+1$. Let $U$ be a \textit{minimal} such subset and notice that $G[U]$ has minimum degree at least $2\lambda+1$ (if there was a node with degree at most $2\lambda$ in $G[U]$, removing it and its edges would give a smaller subgraph that still satisfies the inequality). Notice that $G[U]$ is non-empty, as $|E(U)|$ is an integer that satisfied $|E(U)|> 2\lambda(|U|-1)\geq 0$. Since $G[U]$ is non-empty and has minimum degree exceeding $2\lambda$, we know $|U|\geq 2\lambda+1$. Because of its minimum degree, no node of $U$ ever gets removed in the iterative peeling process, and thus $k=|X|\leq k\leq n-2\lambda-1<n$.
 \end{proof}

\paragraph{Core subgraph} In the case $\lambda(G)>2\lambda$, as argued in the above proof, the graph has a nonempty induced subgraph $G[U]$ with minimum degree above $2\lambda$. We call $G[U]$ the \textit{core} of $G$.

\smallskip

\paragraph{Implementing sampling with efficient time complexity}
In our algorithms, to have a good time complexity (besides the query complexity), we need an efficient implementation of the process of sampling a subset $B$ of a given set $A$, where each element is sampled with probability $p$. We want the time complexity to be proportional to the generated sample size $|B|$, with expectation $p|A|$, rather than the ground set size $|A|$. The following lemma provides such an implementation. We comment that this is a basic and simple question, and perhaps known in prior work. We did not know a citation for it and thus decided to write a proof.

\begin{lemma}
\label[lemma]{lem:sampling}
    Suppose we are given a set of $k$ elements $A$ and we want to create a sample subset $B\subseteq A$ by including each element of $A$ in $B$ with probability $p\in [0,1]$. We can do this in time $O(|B|)$. In particular, the expected time complexity to generate $B$ is $O(pk)$.
\end{lemma}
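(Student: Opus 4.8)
The plan is to sample the "gaps" between consecutive included elements rather than flipping a coin for each element of $A$. Fix an arbitrary ordering $a_1, a_2, \dots, a_k$ of the elements of $A$. If we scan the elements in order, the index of the first element included in $B$ is a geometric random variable with parameter $p$: for $p\in(0,1)$, $\Pr[\text{first success at position } j] = (1-p)^{j-1}p$. More generally, after including some element $a_i$, the number of elements we skip before the next inclusion is again geometric with parameter $p$, independent of the past. So the algorithm is: maintain a current position, repeatedly draw a geometric random variable $g$ (the number of elements to skip, i.e. $g\ge 0$ with $\Pr[g = t] = (1-p)^t p$), advance the position by $g+1$, and if we are still within $\{1,\dots,k\}$, add that element to $B$; stop once the position exceeds $k$. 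Each geometric draw can be performed in $O(1)$ time by the standard inversion method: draw a uniform $U\in(0,1]$ and set the skip length to $\lfloor \ln U / \ln(1-p)\rfloor$; this is exactly geometric, so the resulting $B$ has precisely the right distribution (each element of $A$ independently included with probability $p$). The boundary cases $p=0$ (return $B=\emptyset$ immediately) and $p=1$ (return $B=A$, but note this costs $\Theta(k)$, consistent with $|B|=k$) are handled separately.

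The key steps, in order: (i) state the skip-sampling procedure precisely, including the $O(1)$-time inversion formula for a geometric draw and the handling of $p\in\{0,1\}$; (ii) prove correctness — that the output distribution is exactly "each element of $A$ in $B$ independently with probability $p$" — by observing that the sequence of skip lengths between successive elements of $A$ under independent $p$-coin-flips is i.i.d. geometric, and the inversion method reproduces this distribution exactly; (iii) bound the running time — each iteration of the loop performs $O(1)$ work and either adds a new element to $B$ or terminates, so the number of iterations is at most $|B|+1$, giving time $O(|B|)$; (iv) take expectations: since each element is included independently with probability $p$, $\E[|B|] = pk$, hence the expected time is $O(pk)$, and by linearity this also gives the "in particular" clause.

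I do not expect a serious obstacle here; this is a routine construction. The one point requiring a little care is the exact equivalence between (a) independent Bernoulli$(p)$ choices for each element and (b) iterated geometric skip lengths — one should note that if the scan reaches the end of $A$ in the middle of a skip, no element is added, which is automatically consistent because the procedure only adds an element when the advanced position is $\le k$. A second minor point is that the inversion formula $\lfloor \ln U/\ln(1-p)\rfloor$ must be argued to be $O(1)$ time under the standard assumption that a uniform random real (or enough random bits) and basic arithmetic/logarithm operations cost $O(1)$; if one prefers a unit-cost word-RAM model without real-valued logarithms, one can instead note that geometric variates can be generated in expected $O(1)$ time, which still yields the stated expected-time bound and an $O(|B|)$ bound with high probability. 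Neither subtlety affects the statement as phrased.
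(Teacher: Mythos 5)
Your proposal is correct, but it takes a genuinely different route from the paper. The paper first samples the size $t=|B|$ directly from the binomial distribution, by computing the probabilities $q_i=\binom{k}{i}(1-p)^{k-i}p^i$ incrementally via the ratio $q_i=\frac{p(k-i+1)}{(1-p)i}q_{i-1}$ and interleaving this computation with inverse-transform sampling, so that determining $t$ costs $O(t)$; it then selects $t$ uniformly random elements of $A$ via a partial Fisher--Yates shuffle, using a hash table to store only the $O(t)$ touched array positions. Your skip-sampling approach instead generates the geometric gaps between successive included elements. Both yield exactly the right product-Bernoulli distribution and run in $O(|B|)$ time (really $O(|B|+1)$, a harmless slack present in both arguments). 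Your method is arguably simpler --- no hash table, no binomial recursion --- at the cost of needing a geometric variate in $O(1)$ time (a logarithm under unit-cost real arithmetic, or expected $O(1)$ via bit-by-bit generation, as you note); the paper's method instead needs $(1-p)^k$ and unit-cost real arithmetic for the $q_i$ recursion, so the model assumptions are comparable. One small practical point in the paper's favor: its two-phase structure (first the sample size, then the positions) is occasionally convenient when one wants to budget or cap the sample size before committing to it, but nothing in this paper relies on that, and your argument fully establishes the lemma as stated.
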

\begin{proof} We will first sample the value $t=|B|$ in $O(t)$ complexity, according to the right distribution, and then draw $t$ samples from $A$. 

Notice that $t$ follows a binomial distribution. We should set $t$ equal to $0$ with probability $q_0 = \binom{k}{0}(1 - p)^k$, set it equal $1$ with probability $q_1 = \binom{k}{1}(1 - p)^{k - 1}p^1$, etc. We can calculate $q_i$ based on $q_{i-1}$ in constant time, via $q_i=\frac{p(k - i + 1)}{(1 - p)i} q_{i - 1}$. To generate $t$, we do not need to compute all $q_i$ at once; instead, we interleave the computation of $q_{i}$ with the sampling. That is, we try $i = 0, 1, 2, \cdots$, calculate $q_i$ and check sequentially whether or not we will set $t$ to be $i$. The entire process takes $O(t)$ time, where $t$ is the eventual selected sample size.

Next, we describe how to draw $t$ samples from $A$ uniformly at random in time $O(t)$: Consider an array $a_1, \cdots, a_k$ standing for the elements in $A$. For $i = 1, \cdots, t$, we swap $a_i$ and $a_j$, where $j$ is uniformly generated number in the range $[i, k]$. In this way, each combination of the tuple $(a_1, \cdots, a_t)$ will be generated with equal probability. Since we only need to maintain the positions of the array that has been queried at least once, we will need to access at most $2t$ locations in $A$ (whose indices are determined independent of the values in $A$) and a hash table suffices to make the process run in $O(t)$ time. Also, since $\mathbb{E}(t) = kp$, the expected total time complexity is $O(kp)$.   
\end{proof}


\section{Warm up: An $O(\log\log n)$ Approximation Algorithm}
\label{sec:loglog}
\subsection{Algorithm} 
\label{subsec:loglogAlgorithm}
The base module is an approximate \textit{comparator} of the graph's arboricity $\lambda(G)$ against any given threshold  $\lambda \in [1,n]$, with the following guarantee: if $\lambda(G)\leq \lambda$, then the output is YES, with high probability. If $\lambda(G)\geq \lambda \cdot C\log\log n$, for some constant $C$, then the output is NO, with high probability. In other cases, there is no guarantee.

This comparator consists of $O(\log n)$ independent \textit{tests} each with a YES/NO result. The comparator output is set equal to the \textit{majority} of these test results. Each test works as follows: we randomly sample $\Theta(n\log n/\lambda)$ nodes and perform a process called try-to-peel($v$) on each sampled node $v$. We discuss this process soon; the main property is whether it terminates or not (using a small number of queries). If \textit{all} processes in a test terminate before we have used $\tilde{\Theta}(n/\lambda)$ queries overall, then the test result is YES. Otherwise, it is NO. We next described this process.

\begin{center}
\begin{mdframed}
\paragraph{The \textit{try-to-peel($v$)} process} When this process is called on vertex $v$, we create $L=\Theta(\log n)$ new \textit{procedures} $H_1(v)$, $H_{2}(v)$, \dots, $H_L(v)$, which we run in parallel, with a certain scheduling rule to be discussed. In each procedure $H_{i}(v)$, we sample each neighbor $u$ of $v$ with probability $1/(10\lambda)$. If no neighbor is sampled, the proceduce $H_{t}(v)$ is considered terminated. Otherwise, the procedure $H_{t}(v)$ invokes a recursion try-to-peel($u$) on each sampled neighbor $u$. All the random samplings in different procedures use independent randomness (and the sampling is always done with respect to the original graph $G$). We call the procedure $H_{t}(v)$, for $t\in [L]$, \textit{terminated} once all the recursions it invoked on the sampled neighbors have terminated. Furthermore, we consider the try-to-peel($v$) process \textit{terminated} as soon as one of these $L=O(\log n)$ procedures terminates. We discuss how these $L$ processes $H_1(v)$, $H_{2}(v)$, \dots, $H_L(v)$ are scheduled.

\paragraph{Scheduling/pacing different procedures} In the \textit{try-to-peel($v$)} process, we have a counter $y = 1, 2, 3, \cdots$, which we will increment one by one. When the counter is $y$, let $zeros(y)$ be the number of trailing zeros in the binary representation of $y$. Then, at counter $y$, we take one step in procedures $H_{t}(v)$ for each $t\in [1, zeros(y)+1]$. That is, through the course of try-to-peel($v$), we take one step in procedure $H_{1}(v)$ with every $2^0=1$ increase in $y$, one step in procedure $H_{2}(v)$ with every $2^{1}=2$ increase in $y$, and generally one step in procedure $H_{t}(v)$ with every $2^{t-1}$ increase in $y$.
\end{mdframed}
\end{center}

\paragraph{Why we need a non-trivial scheduling scheme} Here, we provide an intuitive discussion on why we need careful scheduling schemes, and why simpler ideas (e.g., many independent repetitions run simultaneously) do not work. 
Fix a vertex $v \in V(G)$. For simplicity, assume that $\Pr[\textit{$H_t(v)$ terminates}] = 0.8$ and $\mathbb{E}[\textit{query complexity of $H_t(v)$} | \textit{$H_t(v)$ terminates}] = M$. Then, with our scheduling scheme, the query complexity of $\textit{try-to-peal(v)}$ is bounded by $4M$, as illustrated in \Cref{lem:warmup-smallCase}. In contrast, suppose that we schedule naively, i.e., take one step in each procedure $H_t(v)$ with every increase in the counter $y$. Now, the expected query complexity of $\textit{try-to-peal(v)}$ would be $L \cdot \mathbb{E}[\min(T(H_1(v)), \cdots, T(H_L(v)))]$, where $T(H_t(v))$ is a random variable denoting the number of queries $H_t(v)$ makes and satisfies $\mathbb{E}[T(H_t(v))] = M$. One can see that $L \cdot \mathbb{E}[\min(T(H_1(v)), \cdots, T(H_L(v)))]$ can be much greater than $4M$ (e.g., when $T(H_t(v))$ is concentrated around its mean). Notice that while this single $L$ factor might appear harmless as it is only $O(\log n)$, this factor will appear in every level of recursion, and the multiplication of these factors blows up. In contrast, we are able to attenuate and control the effect of the $4$ factor in the $4M$ bound, with the constant in our $1/(10\lambda)$ sampling probability; see \Cref{pointedInequality} in the proof of \Cref{lem:warmup-smallCase}. Because of this blow up of $L$ factor, which appears in every level of recursion, the simpler scheduling does not work, and would not achieve any $\tilde{O}(n/\lambda)$ query complexity\footnote{Let us provide an example graph and a back-of-envelop estimation showing that the naive scheduling of running all $\Theta(\log n)$ copies in parallel would not work with a small query complexity. However, working out a full example requires more detailed probabilistic statements and is beyond the scope of this paper. Consider a graph made of layers $L_1, L_2, L_3, ...,$ where each node in layer $L_i$ has $\lambda$ neighbors in layer $L_{i+1}$ and each node in layer $L_{i+1}$ has $\Theta(\lambda \log\log n)$ neighbors in layer $L_i$. So, we have $n/(\Theta(\log\log n)^i)$ nodes in layer $i$. When we do try-to-peel($v$) on a node $v$ in layer $j$ and invoke $\Theta(\log n)$ procedures each on an independent sampling of the neighbors with probability $1/(10\lambda)$, we expect each procedure to sample $\Theta(\log\log n)$ neighbors in layer $j-1$, and indeed that will happen for all of the $\Theta(\log n)$ copies with probability at least $1-1/\poly(\log n)$. We run all the $\Theta(\log n)$ instances in parallel at the same speed. Even ignoring samples from higher layers, the expected cost of simulating a node in layer $j$ would be $\Theta(\log n)$ times more than the expected cost of simulating $\Theta(\log\log n)$ nodes in layer $j-1$ (and one can see stronger concentrations in the query complexity, besides just this expected bound, considering the independence of the samplings). Working out this recursion, the cost of a node in layer $j$ would be $(\Theta(\log n \cdot \log\log n))^j$. Since there are $n/(\Theta(\log\log n)^j)$ nodes in layer $j$, the total expected query complexity would exceed $\Theta(n)$.}.

\bigskip
\noindent \textbf{Remark on implementation details.} To keep the time complexity of our algorithm small, it is important that we implement the process of \textit{sampling} neighbors efficiently, in such a way that the time to produce the sample is asymptotically equal to the number of samples, rather than the size of the set from which we are sampling. We describe such an implementation in a generic sampling setting in \Cref{lem:sampling}.


\subsection{Analysis}

\paragraph{Outline of the analysis} The analysis involves considering two cases for the test, the low arboricity case where $\lambda(G)\leq \lambda$ and the high arboricity case where $\lambda(G)\geq C\log\log n \cdot \lambda$. 
Our analysis argues that if $\lambda(G)\leq \lambda$, then in each test, with probability at least $2/3$, the result will be YES, i.e., all processes will terminate within query complexity budget $\tilde{O}(n/\lambda)$. In contrast, if $\lambda(G)\geq c\log\log n \cdot \lambda$, we argue that in each test, with probability at least $1-1/\poly(n)$, the result will be NO and at least one sampled vertex $v$ will have its process try-to-peel($v$) not terminate. Considering that the output of the comparator is that of the majority of these $O(\log n)$ independent tests, we get that with high probability, if $\lambda(G)\leq \lambda$ the output is YES, and if $\lambda(G)\geq \lambda \cdot c\log\log n$ the output is NO. We next discuss each of these two cases separately.

\medskip
\subsubsection{Low-Arboricity Case} Suppose that the graph $G$ has arboricity $\lambda(G)\leq \lambda$. We first discuss the analysis outline and intuition. The formal analysis is provided afterward in \Cref{lem:warmup-smallCase} and \Cref{crl:warmup-smallCase}. We show in \Cref{lem:warmup-smallCase} that if $\lambda(G)\leq \lambda$, the try-to-peel($x_i$) process terminates with probability at least $1-1/\poly(n)$ for each node (and with a small query complexity). We will argue this by induction, for the sequence $X=(x_1, \dots, x_n)$ of nodes removed in iterative peeling (cf. \Cref{sec:overview}). In each $H_t(x_i)$ procedure, with probability at least $1-2\lambda/10\lambda=0.8$, we get a \textit{good sample} in the sense that node $x_i$ will have no sampled neighbor $x_{j'}$ for $j'>i$. If that happens, by induction, $H_{t}(x_i)$ will terminate (with a decently bounded query complexity, to be discussed). Since each procedure has $0.8$ probability of having a good sample, with probability at least $1-1/\poly(n)$, one of the $L=\Theta(\log n)$ procedures will have a good sample and terminate, and thus try-to-peel($v$) will terminate. 

Furthermore, we show that in this $\lambda(G)\leq \lambda$ case, the expected total query complexity of try-to-peel($x_i$) among all sampled nodes is $\tilde{O}(n/\lambda)$. Thus the test terminates within the $\tilde{O}(n/\lambda)$ query budget, with probability at least $2/3$. For the former, we prove in \Cref{lem:warmup-smallCase} that when we call try-to-peel($x_i$) on each node $x_i$, the process terminates in expected time $T(x_i)$, such that we have $\sum_{i \in X} T(x_i) = O(n)$. Since in the test we sample each node with probability $\tilde{O}(1/\lambda)$ and run the processes try-to-peel($x_i$) only on the sampled nodes, these processes will terminate within a total expected query complexity of $\tilde{O}(n/\lambda)$, and thus with probability at least $2/3$ within the budget query complexity of $\tilde{O}(n/\lambda)$, as we will conclude in \Cref{crl:warmup-smallCase}. 

We next present the formal statement. 
 
\begin{lemma}\label[lemma]{lem:warmup-smallCase}
      Suppose that $G$ is a graph with arboricity at most $\lambda$, and let $X = (x_1, \cdots, x_n)$ be the order of the vertices according to the time of removal in the sequential peeling process. Then, when we call try-to-peel($x_i$) on each node $x_i$, the process terminates with probability $1-1/\poly(n)$ within $O(n)$ time, and more importantly in expected time $T(x_i)$ such that we have $\sum_{x_i \in X} T(x_i) = O(n)$. 
\end{lemma}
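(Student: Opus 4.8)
The plan is to prove both claims by induction on $i$, following the ordering $X=(x_1,\dots,x_n)$ in which the nodes are peeled. The key structural fact I would exploit is the one highlighted in the analysis outline: when $\lambda(G)\le\lambda$, the node $x_i$ has at most $2\lambda$ neighbors $x_{j'}$ with $j'>i$ (by the outdegree observation in \Cref{sec:overview}), so in each procedure $H_t(x_i)$, which samples each neighbor of $x_i$ independently with probability $1/(10\lambda)$, the probability that no ``later'' neighbor $x_{j'}$ (with $j'>i$) is sampled is at least $(1-1/(10\lambda))^{2\lambda}\ge 1-2\lambda/(10\lambda)=0.8$. Call such a procedure \emph{good}. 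In a good $H_t(x_i)$, every sampled neighbor is some $x_{j'}$ with $j'<i$, so the recursions it spawns are on earlier nodes, and by the induction hypothesis each of those terminates w.h.p.\ within a bounded expected query cost; hence $H_t(x_i)$ terminates. Since the $L=\Theta(\log n)$ procedures use independent randomness, the probability that \emph{none} of them is good is at most $0.2^{L}=1/\poly(n)$, giving the w.h.p.\ termination of try-to-peel($x_i$). For the deterministic $O(n)$ time bound I would observe that the whole recursion tree, even across all procedures, only ever touches sampled neighborhoods; a crude union bound (or simply capping the process at the test's $\tilde O(n/\lambda)$-query budget, which is what the test does anyway) caps the total work.

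The heart of the argument is the expected-time bound $\sum_{x_i\in X} T(x_i)=O(n)$. I would set up a recurrence for $T(x_i) := \mathbb{E}[\text{query complexity of try-to-peel}(x_i)]$. Because of the staggered scheduling, try-to-peel($x_i$) pays, in the worst case, roughly $2^{s}$ times the cost of the first good procedure $H_s(x_i)$, where $s$ is the (random) index of the first good procedure — this is exactly the scheduling overhead that \Cref{pointedInequality} is meant to control. Since each procedure is good independently with probability $\ge 0.8$, we have $\Pr[s\ge t]\le 0.2^{t-1}$, so $\mathbb{E}[2^{s}\cdot(\text{cost of }H_s)]$ is a geometric-type sum $\sum_t 2^{t}\cdot 0.2^{t-1}\cdot(\text{expected cost of one procedure})$, which converges because $2\cdot 0.2=0.4<1$. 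The expected cost of a single procedure $H_s(x_i)$, conditioned on being good, is the cost of recursing on its sampled earlier-neighbors: each neighbor of $x_i$ is sampled with probability $1/(10\lambda)$, so this is at most $\frac{1}{10\lambda}\sum_{u\in N(x_i)} T(u)$ plus $O(\deg(x_i))$ for the sampling itself (via \Cref{lem:sampling}, the sampling cost is proportional to the sample size, whose expectation is $\deg(x_i)/(10\lambda)$, but I must be careful to also account for the scan cost — actually \Cref{lem:sampling} is precisely what lets me charge only $O(\deg(x_i)/(10\lambda))$ in expectation). Putting these together yields, for a suitable constant, something like
\[
T(x_i)\ \le\ C_1\cdot\frac{\deg(x_i)}{\lambda}\ +\ C_2\cdot\frac{1}{\lambda}\sum_{u\in N(x_i)} T(u),
\]
with $C_2<1$ once the constant $10$ in the sampling probability is chosen large enough relative to the scheduling blow-up factor (this is the role of the $10$).

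To close the recurrence I would use the edge orientation from the peeling process: orient each edge of $G$ from the earlier-peeled endpoint toward the later-peeled one, so every $x_i$ has outdegree at most $2\lambda$. Split $\sum_{u\in N(x_i)} T(u)$ into in-neighbors (earlier, $j'<i$) and out-neighbors (later, $j'>i$). The out-neighbor terms are dangerous — they point to nodes with potentially huge $T$ — but they only arise when the procedure is \emph{not} good, and that event has probability $\le 0.2$ per procedure and is paid at the scheduling rate; more cleanly, I expect the intended accounting is to charge $T(x_i)$ against its \emph{out}-edges only: since each node has at most $2\lambda$ out-edges and there are $n$ nodes, $\sum_i (\text{out-deg}(x_i)/\lambda)=O(n)$, and an amortization/potential argument summing the recurrence over all $i$ telescopes to $\sum_i T(x_i)=O(n)$. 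Concretely I would define $\Phi_i=\sum_{j\ge i}$ (contribution of $x_j$), show $\sum_i T(x_i)\le C'\sum_i \deg(x_i)/\lambda$ fails (that's $O(m/\lambda)$, too weak if $m\gg n$), so instead the right move is: prove by reverse induction on $i$ that $T(x_i)\le \alpha\cdot(\text{something like the number of edges in the ``future'' subgraph reachable from }x_i)$, or more simply that $\sum_{x_i\in X}T(x_i)=O(n)$ directly by induction, peeling $x_1$ first (a node all of whose recursive calls bottom out immediately) and absorbing each new $T(x_i)$ into the $O(n)$ budget using that $x_i$'s recursion only reaches \emph{already-peeled} nodes in a good procedure and the bad procedures contribute a convergent geometric correction.

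The main obstacle I anticipate is precisely this last point: controlling the contribution of the ``bad'' procedures — those that sample a later-peeled, possibly high-degree neighbor — inside the expectation, and showing that the staggered scheduling makes their total expected contribution geometrically small rather than catastrophic. Getting the constants to line up (the scheduling doubling factor $2$ against the $0.8$ goodness probability, and the $1/(10\lambda)$ sampling rate against the outdegree bound $2\lambda$) so that the per-level recursion coefficient is strictly below $1$, and then summing the recurrence over all $n$ nodes to land at $O(n)$ rather than $O(n\log n)$ or worse, is the delicate part; this is where I'd expect to lean hardest on \Cref{pointedInequality}.
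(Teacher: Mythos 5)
Your overall strategy matches the paper's: declare a procedure $H_t(x_i)$ \emph{good} when it samples no neighbor $x_{j'}$ with $j'>i$, lower-bound this probability by $0.8$ via the outdegree bound $2\lambda$ and the $1/(10\lambda)$ sampling rate, let $t^*$ be the first good index, and pay a $2^{t^*}$ scheduling overhead weighted against $\Pr[t^*=r]\le 0.2^{r-1}$ — exactly the geometric sum behind \Cref{pointedInequality}. Two points in your write-up, however, would not survive being made precise, and both concern the step you yourself flag as the main obstacle.

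First, your recurrence $T(x_i)\le C_1\deg(x_i)/\lambda + (C_2/\lambda)\sum_{u\in N(x_i)}T(u)$ sums over \emph{all} neighbors, including the later-peeled ones, and you then try to excuse the out-neighbor terms by amortization. As written that recurrence is circular ($T(x_i)$ depends on $T(x_{j'})$ for $j'>i$, which depends back on $T(x_i)$) and does not close by induction along the peeling order. The paper's resolution is cleaner: condition on $t^*=r$. The good procedure $H_r(x_i)$ recurses \emph{only} on earlier-indexed neighbors, so its conditional expected cost is $\frac{1}{10\lambda}\sum_{j<i,\,(x_j,x_i)\in E}\mathbb{E}[Comp(\textit{try-to-peel}(x_j))]+\Theta(\deg(x_i)/\lambda)$, and — this is the key point you half-state but do not actually use — the scheduling rule gives a \emph{deterministic} bound on everything else: when the counter is $y$, procedure $H_{t'}$ has taken $y/2^{t'-1}$ steps, so the total work of all procedures (good and bad alike) at the moment $H_r$ terminates is at most $2y\le 2^{r}\cdot(\text{steps of }H_r)$. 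The bad procedures' recursions into high-index, high-cost nodes therefore never enter the recurrence at all; they are truncated by the termination of $H_r$ and charged entirely to the $2^r$ factor. No separate accounting for out-neighbors is needed, and the recurrence only ever references $T(x_j)$ for $j<i$.

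Second, you dismiss the bound $\sum_i T(x_i)=O(\sum_i\deg(x_i)/\lambda)=O(m/\lambda)$ as ``too weak if $m\gg n$.'' But the lemma's hypothesis is $\lambda(G)\le\lambda$, which forces $m\le\lambda(n-1)$, so $O(m/\lambda)=O(n)$ — and this is precisely the bound the paper lands on after unrolling the recurrence (the per-level coefficient is $2\lambda\cdot\frac{4}{10\lambda}=0.8<1$, so the $\deg(x_j)/\lambda$ contributions form a convergent geometric series). The reverse-induction/potential-function detour you sketch in its place is unnecessary.
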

\begin{proof}
Consider try-to-peel($x_i$) and its $L=O(\log n)$ independent procedures  $H_1(x_i)$, $H_{2}(x_i)$, \dots, $H_L(x_i)$. As discussed above, let us say procedure $H_{t}(x_i)$ has a good sample if in it node $x_i$ has no sampled neighbor $x_{j'}$ for $j'>i$. Since node $x_i$ has at most $2\lambda$ such neighbors and each is sampled with probability $1/(10\lambda)$, the probability that the procedure has a good sample is at least $1-2\lambda/10\lambda=0.8$. Let $t^*$ be the smallest index $t$ such that $H_{t}(x_i)$ has a good sample. Notice that, with probability at least $1-0.2^{L} = 1-1/\poly(n)$, the index $t^*$ is well defined and $t^*\leq L$. For any $r\geq 1$, the probability that $t^*=r$ is at most $(0.2)^{r-1}$. Let us add one comment about the former point: Technically, we apply also a union bound over all the at most $O(n)$ calls to try-to-peel($x_j$) originating from try-to-peel($x_i$) to assume that $t^*$ is well defined in each such triggered try-to-peel($x_j$). Considering the first $\Theta(n)$ such calls suffices for all possibilities of termination in $O(n)$ time. Thus we can union bound the $1/\poly(n)$ failure probability over all the at most $O(n)$ such calls, to conclude that each of them has a well-defined index $t^*\leq L$ for a good sample, with probability $1-O(n)/\poly(n)=1-1/\poly(n).$

Now let us examine the query complexity, conditioned on $t^*=r$. We first consider procedure $H_{r}(x_i)$, which has a good sample. This procedure terminates once all its sampled neighbors $x_j$ for $j<i$ have terminated. For brevity, we use the phrase $Comp(.)$ to indicate the query complexity of the subroutine in the paranthesis, and we let $\mathcal{E}_{r}(x_i)$ denote the event that $H_{r}(x_i)$ has a good sample. Therefore, we have
\begin{align*}
    &\mathbb{E}[Comp(H_{r}(x_i))| \mathcal{E}_{r}(x_i)]  
    \\ &\leq \frac{1}{10\lambda} \sum_{j < i, (x_j, x_i) \in E(G)} \mathbb{E}[Comp(\textit{try-to-peel}(x_j))] \\ 
    &+ \Theta(deg(x_i)/\lambda).
\end{align*}

Suppose that procedure $H_{r}(x_i)$ terminates once the counter in try-to-peel($x_i$) has reached value $y$. At that point, each procedure $H_{t'}(x_i)$ has taken $y/(2^{t'-1})$ steps. So the total number of steps taken by all procedures at the moment that we terminate {try-to-peel($x_i$)} is bounded by $2y$, which is at most $2^{r}$ times the number of steps taken by procedure $H_{r}(x_i)$. For brevity, we use the phrase $Comp(.)$ to indicate the query complexity of the subroutine in the paranthesis, and we let $\mathcal{E}_{r}(x_i)$ denote the event that $H_{r}(x_i)$ has a good sample. Hence,

\begin{alignat}{2}
&&&\mathbb{E}[Comp(\textit{try-to-peel}(x_i))] \nonumber\\
&\le &&\sum_{r=1}^{L} Pr[t^*=r]\,2^r\,
   \mathbb{E}[Comp(H_{r}(x_i))\mid \mathcal{E}_{r}(x_i)] \nonumber\\
&\le &&\sum_{r=1}^{L} 0.2^{r-1}\,2^r\,
   \mathbb{E}[Comp(H_{r}(x_i))\mid \mathcal{E}_{r}(x_i)] \nonumber\\
&\le &&4\,\mathbb{E}[Comp(H_{r}(x_i))\mid \mathcal{E}_{r}(x_i)] \nonumber\\
&\le &&4\, \Big(\frac{1}{10\lambda}
   \sum_{\substack{j<i \\ (x_j,x_i)\in E(G)}}\!
   \mathbb{E}[Comp(\textit{try-to-peel}(x_j))] \tag{*}\label{pointedInequality} + \Theta(\tfrac{\deg(x_i)}{\lambda})\Big) \nonumber 
\end{alignat}

Now, since each node $x_j$ has at most $2\lambda$ neighbors $x_i$ for $i>j$, the contributions of $deg(x_j)/\lambda$ for each node $x_j$ in the summation $\sum_{i=1}^{n}\mathbb{E}[Comp(\textit{query try-to-peel}(x_i))]$ make a geometric series with ratio $2\lambda \cdot \frac{4}{10\lambda} = 0.8$. A detailed analog of this point is presented later in our proof of \Cref{lem:recursiveT}, as used in \Cref{lem:smallCase}. As a result, one can conclude that 
\begin{align*}\sum_{i=1}^{n} \mathbb{E}[Comp(\textit{try-to-peel}(x_i))] = \sum_{i=1}^{n} \Theta(deg(x_i)/\lambda) = O(n).\end{align*}
\end{proof}
\begin{corollary}
    \label[corollary]{crl:warmup-smallCase}
    If $\lambda(G)\leq \lambda$, then in each test, with probability at least $2/3$, the result will be YES. That is, all processes will terminate within query complexity budget $\tilde{O}(n/\lambda)$.
\end{corollary}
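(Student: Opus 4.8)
The plan is to derive this directly from \Cref{lem:warmup-smallCase} by a sampling argument together with Markov's inequality. Recall that one \emph{test} picks a random sample $S$ of $\Theta(n\log n/\lambda)$ vertices — equivalently, up to constant factors, each vertex $v\in V$ is included independently with probability $p=\Theta(\log n/\lambda)$ — runs \textit{try-to-peel}$(v)$ on each $v\in S$, and outputs YES iff all of these processes finish while the running query count never exceeds the global budget $B=\tilde\Theta(n/\lambda)$. So the test outputs YES exactly when the total number of queries it makes is at most $B$, and I only need to control that total.

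First I would bound the expected total query cost of a test. Let $Q$ be the total number of queries made by all the \textit{try-to-peel} calls launched on the sampled vertices. Since $\lambda(G)\le\lambda$, every vertex appears in the peeling order $X=(x_1,\dots,x_n)$, and \Cref{lem:warmup-smallCase} gives $\mathbb{E}[Comp(\textit{try-to-peel}(x_i))]=T(x_i)$ with $\sum_{x_i\in X}T(x_i)=O(n)$. Hence, by linearity of expectation over the inclusion of each vertex in $S$,
\[
\mathbb{E}[Q]\;=\;\sum_{x_i\in X} p\cdot T(x_i)\;=\;p\cdot O(n)\;=\;O\!\Big(\tfrac{n\log n}{\lambda}\Big)\;=\;\tilde O(n/\lambda).
\]
In particular $\mathbb{E}[Q]<\infty$, so $Q<\infty$ almost surely, which already says that every launched process terminates.

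Next I would choose the constant hidden in the budget $B=\tilde\Theta(n/\lambda)$ to be at least three times the constant in the bound $\mathbb{E}[Q]=O(n\log n/\lambda)$, and apply Markov's inequality: $\Pr[Q>B]\le \mathbb{E}[Q]/B\le 1/3$. Therefore, with probability at least $2/3$, we have $Q\le B$, in which case all the \textit{try-to-peel} processes of the test complete without the query count ever exceeding $B$, so the test outputs YES. This gives the corollary. (Optionally, using the $1-1/\poly(n)$ guarantee of \Cref{lem:warmup-smallCase} that each \textit{try-to-peel} halts within $O(n)$ time, a union bound over the $\Theta(n\log n/\lambda)$ sampled vertices shows all processes halt w.h.p.; but this is not needed for the stated $2/3$ bound, and the same \Cref{lem:sampling}-based implementation keeps the time complexity within the same $\tilde O(n/\lambda)$ bound.)

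The one point I would be careful to state cleanly — the only real subtlety — is the interaction between the randomness of the sample $S$ and the randomness internal to the recursive \textit{try-to-peel} processes. The bound $\sum_i T(x_i)=O(n)$ from \Cref{lem:warmup-smallCase} is an unconditional statement about $\mathbb{E}[Comp(\textit{try-to-peel}(x_i))]$ for the entire recursive process (with the scheduling overhead already absorbed into it, cf.\ the $2^r$ factor in that proof), and the event ``$x_i\in S$'' is independent of that internal randomness; so conditioning on which vertices are sampled does not change the per-node expectations, and the application of linearity of expectation above is legitimate. Beyond that, the argument is just Markov plus a choice of constants, so I do not anticipate further obstacles.
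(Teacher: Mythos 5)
Your proposal is correct and follows essentially the same route as the paper's proof: invoke \Cref{lem:warmup-smallCase}, use linearity of expectation over the $\tilde{O}(1/\lambda)$-probability vertex sample to get expected total query cost $\tilde{O}(n/\lambda)$, and conclude via Markov's inequality after inflating the budget constant by a factor of $3$. Your added remark on the independence of the sampling randomness from the internal randomness of the recursive calls is a reasonable clarification but does not change the argument.
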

\begin{proof}
\Cref{lem:warmup-smallCase} shows that try-to-peel($x_i$) on each node $x_i$ terminates with probability $1-1/\poly(n)$ and in expected time (and thus also query complexity) at most $T(x_i)$, such that we have $\sum_{i \in X} T(x_i) = O(n)$. Our test runs try-to-peel($x_i$) on only a sample of $\tilde{O}(n/\lambda)$ nodes, by sampling each node with probability $\tilde{O}(1/\lambda)$. Hence, the expected time query complexity of running the test is $\tilde{O}(n/\lambda)$. Hence, the test terminates all the try-to-peel($x_i$) runs within the $\tilde{O}(n/\lambda)$ query budget---where we have increased the constant in the $\tilde{O}()$ notation by a $3$ factor---with probability at least $2/3$.
\end{proof}

\subsubsection{High-Arboricity Case} 
Suppose that $\lambda(G)\geq C\log\log n \cdot\lambda$. We first discuss the analysis outline. The formal analysis is provided afterward in \Cref{lem:warmup-largeCase} and \Cref{crl:warmup-largeCase}. Consider the core induced subgraph $G[U]$ which has minimum degree at least $C\log\log n \cdot\lambda$. We argue that nodes of the core are not expected to terminate. More concretely, we show that when running try-to-peel($v$) for a node $v\in U$, with probability at least $1-1/\poly(\log n)$, each of the $L$ procedures has at least $\log\log n$ sampled neighbors from the core $U$ and because of this, the probability of termination for try-to-peel($v$) is at most $1/\poly(\log n)$. Hence, given that we sample $\Theta(n\log n/\lambda)$ nodes, we will have $\Omega(\log n)$ nodes sampled from the core with high probability and at least one of them will not terminate, with high probability.  

We next present the formal statement.
 
\begin{lemma}\label[lemma]{lem:warmup-largeCase} There exists a constant $C\geq 100$ for which the following holds: Suppose that $G$ is a graph with arboricity at least $C\log\log n \cdot \lambda$. Let $G[U]$ be an induced subgraph with a minimum degree at least $C\log\log n \cdot \lambda$. For any $v \in U$,  with probability at least $1-1/\log^3 n$, the process try-to-peel($v$) will not terminate. 
\end{lemma}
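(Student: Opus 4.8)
The plan is to argue that when we run try-to-peel($v$) for a core node $v \in U$, \emph{none} of the $L$ procedures $H_1(v), \dots, H_L(v)$ can terminate quickly, because each of them samples many core neighbors, each of which recursively spawns further core neighbors, producing a recursion tree that never bottoms out. First I would fix a single procedure $H_t(v)$. Since $v \in U$ has at least $C \log\log n \cdot \lambda$ neighbors inside $U$, and each neighbor is sampled independently with probability $1/(10\lambda)$, the number of sampled core neighbors is a binomial with mean at least $(C/10)\log\log n$. Choosing $C$ large enough (say $C \ge 100$), a Chernoff bound gives that $H_t(v)$ samples at least, say, $5\log\log n$ core neighbors with probability at least $1 - 1/\log^{10} n$ (the exponent is tunable via $C$). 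In particular $H_t(v)$ invokes try-to-peel($u$) on at least $5\log\log n$ core nodes $u \in U$, and $H_t(v)$ cannot be declared terminated until \emph{all} of those recursive calls terminate.

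Next I would set up the recursion / branching argument. Define, for a core node $v$, the ``core-branching tree'' obtained by repeatedly expanding: at $v$, pick one procedure's sampled-core-neighbor set; at each such neighbor $u \in U$, again pick one procedure's sampled-core-neighbor set, and so on. The point is that for try-to-peel($v$) to terminate via procedure $H_t(v)$, \emph{every} node reachable in this expansion (along the relevant procedure at each node) must have had one of its $L$ procedures with an empty sample — but each procedure at a core node has probability $\le (1-1/(10\lambda))^{C\log\log n \cdot \lambda} \le e^{-(C/10)\log\log n} = (\log n)^{-C/10}$ of sampling no core neighbor, hence probability $\le (\log n)^{-\Omega(C)}$ of sampling no core neighbor in \emph{any single} procedure, and $\le L \cdot (\log n)^{-\Omega(C)}$ over all $L$ procedures by a union bound; making $C$ large this is $\le 1/\log^{10} n$. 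So each core node in the tree, with probability $\ge 1 - 1/\log^{10} n$, forces its recursion to continue with $\ge 5\log\log n$ further core children (one level deeper in \emph{some} procedure). I would then observe that a recursion which keeps branching like this cannot terminate within any finite time — more carefully, to terminate we need a ``cut'' in this infinite branching structure where every node on the cut had an empty sample in all $L$ procedures simultaneously, and by a union bound over the (at most, say, $\poly(\log n)$-many relevant) nodes considered before termination, this fails with probability at most $1/\log^3 n$. Formally: condition on the high-probability event $\mathcal G$ that every core node encountered in the first $T := n$ steps of try-to-peel($v$) has, in each of its $L$ procedures, at least one sampled core neighbor; on $\mathcal G$, try-to-peel($v$) never reaches a terminated state, because termination of a procedure requires all its recursive calls on (core) neighbors to terminate, which inductively requires some procedure with an empty core sample somewhere, contradicting $\mathcal G$. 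A union bound over the $O(n)$ core nodes potentially touched (analogous to the union-bound comment in the proof of \Cref{lem:warmup-smallCase}) shows $\Pr[\mathcal G] \ge 1 - 1/\log^3 n$ once $C$ is a sufficiently large constant.

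The main obstacle I expect is making the ``never terminates'' conclusion rigorous despite the recursion being a priori unbounded: one must be careful that the event $\mathcal G$ is defined over only polynomially many (in $\log n$, or at most $O(n)$) random samplings so that the union bound is affordable, and that the logical implication ``$\mathcal G$ holds $\Rightarrow$ try-to-peel($v$) does not terminate'' is genuinely valid — i.e., that one cannot terminate by some other route. The clean way is to argue the contrapositive by strong induction on termination time: if try-to-peel($v$) terminates, then some procedure $H_t(v)$ terminated, which means all its recursive calls terminated (in strictly less time), so by induction each such call's sample-event failed somewhere; tracing back, $v$ itself must have had \emph{some} procedure with no sampled core neighbor (since a recursive call on a non-core neighbor can also terminate, I need to track that the procedure $H_t(v)$ can terminate while sampling core neighbors only if those core recursive calls terminate — so the induction is really on core nodes, and the base obstruction is a core node all of whose $L$ procedures sampled $\ge 5\log\log n$ core neighbors yet all its recursive calls terminated, which again pushes the obstruction deeper). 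The other routine-but-necessary care is choosing the constant $C$ (and the Chernoff slack) so that all the $(\log n)^{-\Omega(C)}$ and $L\cdot(\log n)^{-\Omega(C)}$ terms, union-bounded over $O(n)$ nodes, still leave failure probability below $1/\log^3 n$; this is just bookkeeping once the structure above is in place.
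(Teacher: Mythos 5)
Your deterministic reduction is sound: if every procedure at every core node encountered samples at least one core neighbor, then an induction on termination time shows no core node's try-to-peel can terminate. The gap is in how you bound the probability of that good event $\mathcal G$. You propose a union bound over all core nodes touched, but a single node fails (i.e., has some procedure with no sampled core neighbor) with probability about $L\cdot(1-\tfrac{1}{10\lambda})^{C\log\log n\cdot\lambda}=O\bigl(\log n\cdot(\log n)^{-C/10}\bigr)$, which is only $1/\poly(\log n)$ for any constant $C$. The number of recursion-tree core nodes touched is not $\poly(\log n)$: in the non-terminating case the recursion tree keeps growing, within the test's budget it can reach $\Omega(n/\lambda)$ occurrences, and each occurrence uses fresh randomness so each counts as a separate event. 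A union bound of $1/\poly(\log n)$ over $\omega(\poly(\log n))$ events gives nothing, and no constant $C$ repairs this; your own hedging between ``$\poly(\log n)$-many relevant nodes'' and ``$O(n)$ core nodes potentially touched'' is exactly where the argument breaks.

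The paper's proof avoids any global union bound by making the failure probability self-sustaining level by level (formalized by induction on the height of the recursion tree, as done for the analogous \Cref{lem:largeCase}). Conditioned on a good sample, each of the $L$ procedures at $v$ has at least $\log\log n$ core children; by the inductive hypothesis each child's process terminates with probability at most $1/\log^3 n$, independently, so a fixed procedure terminates with probability at most $(1/\log^3 n)^{\log\log n}$, and a union bound over only the $L=O(\log n)$ procedures at the \emph{current} node gives $O(\log n)\cdot(1/\log^3 n)^{\log\log n}$. Adding the $O(1/\log^4 n)$ bad-sample probability keeps the total at most $1/\log^3 n$ at every level, with no accumulation across levels. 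The crucial ingredient you never exploit is the multiplicity: having $\log\log n$ independent core children per procedure turns the per-child bound $1/\log^3 n$ into the super-polylogarithmically small $(1/\log^3 n)^{\log\log n}$, which is what absorbs the per-node union bound. Using only ``at least one core child per procedure,'' as your argument effectively does, cannot close the induction.
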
 
\begin{proof}[Proof Sketch]
Consider the process try-to-peel($v$) and the $L=O(\log n)$ procedures in it $H_{1}(v)$, $H_{2}(v)$, ..., $H_{L}(v)$. Let us call the sampling of procedure $H_{t}(v)$ \textit{bad} if less than $\log\log n$ neighbors of $v$ in $U$ get sampled. Since node $v$ has $C\log\log n \cdot \lambda$ neighbors in $U$, and the sampling probability is $1/(10\lambda)$, the expected number of sampled neighbors is at least $10\log\log n$. Thus, with probability at least $1-1/\log^5{n}$, at least $\log\log n$ neighbors are sampled in the procedure  $H_{t}(v)$. Let us say process try-to-peel($v$) is bad if one of the $L$ procedures has a bad sample. Thus, the probability of the process being bad is at most $O(\log n)/\log^5 n$. That is, with probability $1-O(1/\log^4 n)$, the process try-to-peel($v$) is not bad. In this case, since each procedure has $\log\log n$ sampled nodes, and each sampled node terminates with probability at most $1/\log^3 n$, all with independence, we can conclude that the probability that any of them terminate is at most $(1/\log^3 n)^{\log\log n} \cdot O(\log n).$ Hence, with probability $1-O(1/\log^4 n) - (1/\log^3 n)^{\log\log n} \cdot O(\log n) \geq 1-1/\log^3 n,$ the process try-to-peel($v$) does not terminate.\footnote{The argument here is not fully formal but it can be formalized with an induction based on the height of the recursion tree. We keep that formal argument for the variant of this result in \Cref{lem:largeCase}.} 
\end{proof}

\begin{corollary}
    \label[corollary]{crl:warmup-largeCase}
    If $\lambda(G)\geq C\log\log n \cdot \lambda$, then in each test, w.h.p., the result will be NO, i.e., some processe will not terminate.
\end{corollary}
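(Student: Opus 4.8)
The plan is to combine Lemma \ref{lem:warmup-largeCase} with a sampling argument showing that enough core vertices get sampled in each test. First I would recall the setup: a test samples $\Theta(n\log n/\lambda)$ vertices uniformly at random and runs try-to-peel on each; the test outputs NO if some process fails to terminate within the query budget. Since $\lambda(G)\geq C\log\log n\cdot\lambda$, there is a core induced subgraph $G[U]$ with minimum degree at least $C\log\log n\cdot\lambda$, and in particular $|U|\geq C\log\log n\cdot\lambda \geq \lambda$ (a graph with minimum degree $d$ has at least $d+1$ vertices).

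The key step is to lower-bound the number of sampled core vertices. Each of the $\Theta(n\log n/\lambda)$ samples lands in $U$ with probability $|U|/n \geq \lambda/n$, so the expected number of sampled vertices in $U$ is $\Omega(\log n)$. By a Chernoff bound, with high probability at least $\Omega(\log n)$ core vertices are sampled; in particular, at least one is sampled, and indeed we get $\omega(1)$ of them. Then I would apply Lemma \ref{lem:warmup-largeCase}: each sampled core vertex $v\in U$ has $\Pr[\text{try-to-peel}(v)\text{ terminates}]\leq 1/\log^3 n$. Taking a union bound over the (at most $\poly(n)$, in fact $O(n\log n/\lambda)$) sampled core vertices — or more carefully, using that with high probability we have $\Theta(\log n)$ independent such vertices each failing to terminate with probability $\geq 1-1/\log^3 n$ — the probability that \emph{all} sampled core vertices terminate is at most $O(\log n)/\log^3 n = O(1/\log^2 n)$, hence $o(1)$. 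Wait — to get ``with high probability'' ($1-1/\poly(n)$) rather than merely ``with probability $1-o(1)$'', I should instead argue: with high probability $\Omega(\log n)$ core vertices are sampled, and since their try-to-peel processes use independent randomness, the probability that every one of them terminates is at most $(1/\log^3 n)^{\Omega(\log n)} = 1/\poly(n)$. Therefore w.h.p.\ at least one sampled core vertex's process does not terminate within the budget, so the test outputs NO.

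The main obstacle, and the only subtle point, is the independence/union-bound bookkeeping needed to boost ``probability $1-1/\log^3 n$ per vertex'' into a high-probability statement for the test. One has to be slightly careful that Lemma \ref{lem:warmup-largeCase} bounds the termination probability of try-to-peel$(v)$ for a \emph{fixed} $v\in U$, and that running the process on several sampled core vertices simultaneously uses fresh independent randomness for each (as stipulated in the algorithm description), so the termination events are independent and their probabilities multiply. Given the high-probability lower bound of $\Omega(\log n)$ on the number of sampled core vertices (itself a Chernoff bound), multiplying $\Omega(\log n)$ factors of $1/\log^3 n$ yields a $1/\poly(n)$ bound on the bad event, which is exactly the w.h.p.\ guarantee claimed. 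The remaining details — that a non-terminating process indeed exhausts the query budget and thus the test correctly reports NO — are immediate from the definition of a test.
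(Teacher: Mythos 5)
Your proposal is correct and follows essentially the same route as the paper: lower-bound $|U|$ via the minimum degree of the core, use a Chernoff bound to get $\Omega(\log n)$ sampled core vertices w.h.p., and then use the independence of the try-to-peel invocations together with \Cref{lem:warmup-largeCase} to bound the probability that all of them terminate by $(1/\log^3 n)^{\Omega(\log n)} = 1/\poly(n)$. Your explicit attention to the independence bookkeeping (rather than a union bound, which would only give $1-O(1/\log^2 n)$) is exactly the step the paper's one-line conclusion implicitly relies on.
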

\begin{proof} The test samples $\Theta(n\log n/\lambda)$ nodes. Since the core has at least $C\log\log n \cdot \lambda$ vertices, we will have $\Omega(\log n)$ nodes sampled from the core $G[U]$ with high probability. For each core node $v\in U$, by \Cref{lem:warmup-largeCase}, with probability at least $1-1/\log^3 n$, the process try-to-peel($v$) will not terminate. Hence, with high probability, at least one of the $\Omega(\log n)$ processes triggered on core nodes will not terminate.
\end{proof}

\section{Main Result: An $O(1)$ Approximation Algorithm}
\label{sec:algorithm}
Our algorithm is surprisingly short. However, it is based on a complex probabilistic recursive structure, with careful scheduling and decision rules meant to control the error probabilities of different directions. Because of this, the algorithm's design might be less intuitive. For conciseness, we first present the algorithm in its entirety in \Cref{subsec:mainAlgorithm}. We then provide additional discussions in the \Cref{subsec:discussion}, and then the analysis in \Cref{subsec:analysis}.

\subsection{Algorithm}\label{subsec:mainAlgorithm}
The algorithm involves a general outline, mostly similar to that of \Cref{sec:loglog}, and two key main probabilistic recursive modules which are quite different than before and nuanced. We first describe the outline to clarify the role of the modules, and then present the modules.

\subsubsection{Outline} As before, the main part of the algorithm is a comparator that, given threshold $\lambda$ runs in $\tilde{O}(n/\lambda)$ time and generates a YES/NO output, with the following interpretation: YES indicates low arboricity and NO indicates high arboricity. The comparator consists of $O(\log n)$ independent \textit{tests}, each with a YES/NO result, and the comparator outputs the majority decision among these results. Each test works as follows: we sample each vertex with $\frac{1}{\lambda}$ probability independently, and we run the \textit{fortified peeling procedure} $F(s)$, which is described in \Cref{subsub:modules}, on each sampled vertex $s$. If all of these procedures finish in a total of $\frac{10 S}{\lambda}$ time steps, the test result is YES. Here, $S = \tilde O(n)$ and the exact value of it will be discussed later in the analysis. Otherwise, the test result is NO.  

By definition, the time complexity and query complexity of each test, and thus also the comparator, is $\tilde{O}(n/\lambda)$. In the analysis, we will argue that, w.h.p., if $\lambda(G)\leq \lambda$, the comparator will say YES and if $\lambda(G)>c\lambda$, it will say NO. Given this, the constant approximation algorithm follows easily as described later in the proof of \Cref{thm:Main} presented in \Cref{subsub:wrap}.

\subsubsection{Main Modules}
\label{subsub:modules} The core ingredients in our algorithm are two probabilistic recursive modules/processes: a basic process named \textit{single-shot peeling process $H(v)$}, and a more complex process \textit{fortified peeling process $F(v)$} which makes use of many parallel invocations of $H(v)$. We next present these two modules. 

\medskip
\begin{center}
\begin{minipage}{0.9\linewidth}
\begin{framed}
\paragraph{Single-shot peeling process $H(v)$} Sample a subset $\Gamma(v)\subseteq N(v)$ by including each neighbor $u$ of $v$ in $\Gamma(v)$ independently with probability $\frac{1}{25 \lambda}$. Then, for each neighbor $u\in \Gamma(v)$, run the fortified peeling process $F(u)$. If/once all these processes $F(u)$ have terminated (or if no neighbor was sampled), we call process $H(v)$ terminated.
\end{framed}
\end{minipage}
\end{center}

\noindent \textbf{Remark.} As we also saw in the warm-up algorithm of \Cref{sec:loglog}, to keep the time complexity of our algorithm small, it is important that we implement the procedure of \textit{sampling} neighbors efficiently, in such a way that the time to produce the sample is asymptotically equal to the number of nodes in the generated sampled. See \Cref{lem:sampling} for such an implementation.


  \begin{center}
\begin{minipage}{0.9\linewidth}
\begin{framed}
\paragraph{Fortified peeling process $F(v)$} The process $F(v)$ involves an infinite number of \textit{batches}, which are run in parallel with a careful scheduling rule to be discussed. The process $F(v)$ is called \textit{terminated} as soon as one of these batches is terminated. The $t$-th batch consists of $2t - 1$ independent recursive calls to the single-shot peeling $H(v)$---denoted by $H_{t, 1}(v), H_{t, 2}(v), \cdots, H_{t, 2t - 1}(v)$---which run at the same speed and in parallel. These calls use independent randomness. We call batch $t$ \textit{terminated} iff more than half of the processes $H_{t, .}(v)$ of this batch terminate. 

\smallskip
\paragraph{Scheduling rule} We run the batches in $F(v)$ in parallel but at different speeds, according to the following scheduling rule. In the process $F(v)$, we keep a counter $y = 1, 2, 3, \cdots$. When the counter is $y$, let $zeros(y)$ be the number of trailing zeros in the binary representation of $y$. Then, for each batch $t\in [1, zeros(y)+1]$, we run one step for each of the recursive calls $H_{t, 1}(v), H_{t, 2}(v), \cdots, H_{t, 2t - 1}(v)$ of batch $t$. That is, we take one step in batch $1$ executions with every $2^0=1$ increase in $y$, one step in batch $2$ executions with every $2^{1}=2$ increase in $y$, and generally one step in batch $t$ executions with every $2^{t-1}$ increase in $y$. 
\end{framed}
\end{minipage}
\end{center}

\medskip


\subsection{Discussion}
\label{subsec:discussion}
Here, we provide a discussion to help understandability, before we proceed to the detailed analysis. We start with the highest level of how the guarantees of the fortified process are tied to the overall comparator output, and then mention the guarantees of the two modules. The exact termination guarantees are nuanced, with carefully controlled error probabilities that depend on each other. 

\medskip
\subsubsection{What the comparator needs from the fortified peeling process} If $\lambda(G) \leq \lambda$, we will show that the expected time bound for the fortified peeling process $F()$ summed up over all sampled vertices is upper bounded by $\frac{S}{\lambda}$ and thus the test will say YES with probability at least $0.9$. On the other hand, suppose $\lambda(G)\geq c\lambda$. Then, the graph includes a core induced subgraph $G[U]$ with minimum degree at least $c\lambda$, and of course where $|U|\geq c\lambda$. In this case, with probability at least $1-(1 - \frac{1}{\lambda})^{c\lambda}\geq 0.85$, the test samples at least one vertex in this core $U$. We will show that for any core vertex, with probability at least $0.8$, the fortified process does not terminate on it. Hence, overall, with probability at least $0.85\cdot 0.8 \geq 0.6$, the processes in the test will not terminate and the test result will be NO.

\subsubsection{Guarantees of the fortified and single-shot peeling processes}
\paragraph{Fortified peeling process} As used above, the fortified peeling process $F(v)$ is devised to come with the following termination guarantees:
\begin{enumerate}
    \item[(I)] If $v \in X$---i.e., if $v$ is a vertex that would get peeled in the sequential greedy algorithm---then $F(v)$ terminates with probability $1$. Moreover, the expected running time of the process in this case is at most $T(v)$. Here $T(v)$ is a deterministic function of the graph, with a recursive definition that will be provided later, and satisfies the following guarantee: $\sum_{v \in V(G)} T(v) = \tilde O(n)$. We set the parameter $S$ in the comparator algorithm, which was so far just mentioned to be $\tilde{O}(n)$, equal to $S= \sum_{v \in V(G)} T(v)$.
  
    \item[(II)] If the graph has a nonempty induced subgraph $G[U]$, where $U \subseteq V$, with minimum degree at least $120 \lambda$, then for any node $u \in U$, with probability at least $0.8$, the process $F(u)$ will never terminate. 
\end{enumerate}

 
\paragraph{Single-shot peeling process} Our fortified process makes use of several calls to the \textit{single-shot} process $H(v)$. This process $H(v)$ comes with certain ``weaker" guarantees:

\begin{enumerate}
    \item[(I)] If $v \in X$, we can define some bad event that occurs during one call of $H(v)$ with probability no more than $0.1$. If this bad event does not occur, the expected running time of $H(v)$ is at most $\frac{1}{5} T(v)$. 
  
    \item[(II)] If the graph has a nonempty induced subgraph  $G[U]$ with minimum degree at least $120 \lambda$, for any $u \in U$, with probability at least $0.9$, the process $H(u)$ will never terminate. 
\end{enumerate}

  \subsection{Analysis}
  \label{subsec:analysis}

  \subsubsection{Low-Arboricity Case}
  We analyze the case of low arboricity graphs with $\lambda(G)\leq \lambda$ here, and show that, in such graphs, with probability at least $0.9$, the result will be YES. That is, all $F(v)$ processes for the sampled vertices $v$ terminate within the total query complexity budget $10S/\lambda$. 
  
  Let us start with a (deterministic) recursive function of the graph and an important property of it. We later use these in the analysis of the algorithm. 
   \begin{lemma}\label[lemma]{lem:recursiveT} Suppose that $G$ is a graph with arboricity at most $\lambda$, and let $X = (x_1, \cdots, x_n)$ be the order of the vertices according to the time of removal in the sequential peeling process. Let $T$ be a function defined recursively as $T(x_i) := \frac{2}{5\lambda} \sum_{j < i, (x_j, x_i) \in E(G)} T({x_j}) + \frac{10 C \, \mathrm{deg}(x_i)}{\lambda}$, where $C$ is a given constant. Then, we have $\sum_{x_i\in X} T(x_i) = O(n)$.
   \end{lemma}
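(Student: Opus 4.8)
The plan is to sum the recursive definition of $T$ over all vertices and exploit the one structural property of the peeling order that makes the recursion contract: every vertex has bounded \emph{forward} degree. Recall from the observation in \Cref{sec:overview} that when $x_j$ is removed in the sequential peeling process its degree in the then-remaining graph is at most $2\lambda$; since $\lambda(G)\le\lambda$, every vertex is eventually removed, so the neighbors of $x_j$ still present at its removal time are exactly its neighbors $x_i$ with $i>j$. Hence $\bigl|\{\,i>j:(x_j,x_i)\in E(G)\,\}\bigr|\le 2\lambda$ for every $j$. This is the only combinatorial input needed.

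First I would observe that $T$ is well-defined: $T(x_i)$ depends only on the values $T(x_j)$ with $j<i$, so the recursion unwinds by induction on the index, and $\mathcal{T}:=\sum_{x_i\in X}T(x_i)$ is a finite sum of finite, nonnegative quantities. Then I would sum the defining identity over $i=1,\dots,n$ and regroup the first (double) sum by the inner index $j$: for a fixed $j$, the term $T(x_j)$ is counted once for each $i>j$ with $(x_j,x_i)\in E(G)$, i.e.\ at most $2\lambda$ times. Using $\sum_i \deg(x_i)=2m$, this gives
\[
\mathcal{T}
=\frac{2}{5\lambda}\sum_{j} T(x_j)\,\bigl|\{\,i>j:(x_j,x_i)\in E(G)\,\}\bigr|
+\frac{10C}{\lambda}\sum_{i}\deg(x_i)
\le \frac{2}{5\lambda}\cdot 2\lambda\cdot\mathcal{T}+\frac{20Cm}{\lambda}
=\frac{4}{5}\,\mathcal{T}+\frac{20Cm}{\lambda}.
\]
Rearranging—which is legitimate precisely because $\mathcal{T}<\infty$ and the contraction ratio $\tfrac{2}{5\lambda}\cdot 2\lambda=\tfrac45<1$—yields $\mathcal{T}\le \tfrac{100Cm}{\lambda}$. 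Finally, since $\lambda(G)\le\lambda$, the Nash-Williams formula gives $m=|E(V)|\le \lambda(|V|-1)\le\lambda n$, so $\mathcal{T}\le 100Cn=O(n)$ because $C=O(1)$.

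I do not expect a genuine obstacle here; the argument is essentially the rigorous bookkeeping behind the "geometric series with ratio $0.8$" remark used informally in the warm-up (\Cref{lem:warmup-smallCase}). The only points requiring a moment of care are (i) the forward-degree bound $\le 2\lambda$, which rests on the fact that in the low-arboricity case all vertices get peeled, and (ii) checking that the rearrangement $\mathcal{T}\le\frac45\mathcal{T}+\cdots$ is valid, for which one only needs $\mathcal{T}$ finite and the ratio strictly below $1$. The latter is exactly why the coefficient $\tfrac{2}{5\lambda}$ (the counterpart of the warm-up's $\tfrac{4}{10\lambda}$, itself dictated by the constant in the $\tfrac{1}{25\lambda}$ sampling probability together with the scheduling overhead) is chosen so that $2\lambda$ copies of it sum to a constant bounded away from $1$.
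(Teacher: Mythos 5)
Your proof is correct, and it takes a genuinely different (and shorter) route than the paper's. The paper proves the bound by explicitly unrolling the recursion: it counts, for each vertex $x_i$ and each $k$, the number $z_{k,i}$ of ``forward'' directed paths of $k$ nodes starting at $x_i$, establishes by induction the exact identity $\sum_i T(x_i)=\sum_i \frac{10C\deg(x_i)}{\lambda}\sum_{k\ge 1} z_{k,i}(\tfrac{2}{5\lambda})^{k-1}$, and then bounds $z_{k,i}\le(2\lambda)^{k-1}$ to sum a geometric series with ratio $4/5$. You instead sum the defining recursion once, regroup the double sum by the inner index, and absorb the resulting $\tfrac{4}{5}\mathcal{T}$ term via a contraction inequality, justified by the finiteness of $\mathcal{T}$ (which you correctly note follows since the recursion unwinds by index). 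Both arguments rest on exactly the same combinatorial fact --- the forward degree bound $|\{i>j:(x_j,x_i)\in E\}|\le 2\lambda$, which you correctly derive from the peeling process together with the fact that all vertices are peeled when $\lambda(G)\le\lambda$ --- and on the same choice of coefficient $\tfrac{2}{5\lambda}$ making $2\lambda$ copies sum to $\tfrac45<1$. What your version buys is brevity and transparency; what the paper's version buys is the explicit per-vertex weight $\sum_k z_{k,i}(\tfrac{2}{5\lambda})^{k-1}$, i.e.\ a pointwise interpretation of $T(x_i)$ as a sum over forward paths, which mirrors how the recursive query costs actually propagate in the algorithm. Your final numeric bound ($\mathcal{T}\le 100Cn$ via $m\le\lambda n$ from Nash--Williams) matches the paper's $100Cn$ up to the same constants.
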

   \begin{proof}
    We call $\mathcal{P}=(x_{p_1}, x_{p_2}, \dots, x_{p_k})$ a \textit{$k$-node directed path} if $p_1 < \cdots < p_k$ and for each two consecutive indices ${p_i}$ and ${p_{i + 1}}$, there is an edge between $x_{p_i}$ and $x_{p_{i + 1}}$ in our graph $G$. Let $Q(k)$ denote the collection of all $k$-node directed paths in graph $G$. 
    
    Let $z_{k, i}$ be the number of $k$-node directed paths $\mathcal{P} \in Q(k)$ that start in node $x_{i}$. In particular, $z_{1, i}=1$ for all $i\in [n]$. Among these, let $z_{k, i, j}$ for $k\geq 2$ and $j>i$ be the number of $k$-node directed paths that start in $x_i$ and have all of their other nodes in $\{x_{j}, \dots, x_{n}\}$. For $k=1$, the start and endpoints have to be the same and thus we define $z_{1, i, j}=1$ for all $i$ and $j$.

    By one step of unrolling $T(x_n)$ with its recursive definition, we have  
    \shortOnly{
    \begin{align*}
        & &&\sum_{i=1}^{n} T(x_i) \\
        &= &&\sum_{i=1}^{n-1} T(x_i) + T(x_n) \\
        &=&&\sum_{i=1}^{n-1} T(x_i) + \frac{2}{5\lambda} \sum_{j < n, (x_j, x_n) \in E(G)} T({x_j}) + \frac{10 C \, \mathrm{deg}(x_n)}{\lambda} \\
        &= &&\sum_{i=1}^{n-1} T(x_i) \left(1+ {z_{2, i, n}} (\frac{2}{5\lambda})\right) + \frac{10 C \, \mathrm{deg}(x_n)}{\lambda}  \\
        &=&&\sum_{i=1}^{n-1} T(x_i) \left(\sum_{k=1}^{2} {z_{k, i, n}} (\frac{2}{5\lambda})^{k-1}\right) + \frac{10 C \, \mathrm{deg}(x_n)}{\lambda}  \\
        &=&&\sum_{i=1}^{n-1} T(x_i) \left(\sum_{k=1}^{\infty} {z_{k, i, n}} (\frac{2}{5\lambda})^{k-1}\right) \\
        & &&+ \frac{10 C \, \mathrm{deg}(x_n)}{\lambda} \cdot \left(\sum_{k=1}^{\infty} z_{k, n} \cdot (\frac{2}{5\lambda})^{k-1}\right),
    \end{align*}
    }
    \fullOnly{
    \begin{align*}
        & &&\sum_{i=1}^{n} T(x_i) \\
        &= &&\sum_{i=1}^{n-1} T(x_i) + T(x_n) \\
        &=&&\sum_{i=1}^{n-1} T(x_i) + \frac{2}{5\lambda} \sum_{j < n, (x_j, x_n) \in E(G)} T({x_j}) + \frac{10 C \, \mathrm{deg}(x_n)}{\lambda} \\
        &= &&\sum_{i=1}^{n-1} T(x_i) \left(1+ {z_{2, i, n}} (\frac{2}{5\lambda})\right) + \frac{10 C \, \mathrm{deg}(x_n)}{\lambda}  \\
        &=&&\sum_{i=1}^{n-1} T(x_i) \left(\sum_{k=1}^{2} {z_{k, i, n}} (\frac{2}{5\lambda})^{k-1}\right) + \frac{10 C \, \mathrm{deg}(x_n)}{\lambda}  \\
        &=&&\sum_{i=1}^{n-1} T(x_i) \left(\sum_{k=1}^{\infty} {z_{k, i, n}} (\frac{2}{5\lambda})^{k-1}\right) + \frac{10 C \, \mathrm{deg}(x_n)}{\lambda} \cdot \left(\sum_{k=1}^{\infty} z_{k, n} \cdot (\frac{2}{5\lambda})^{k-1}\right),
    \end{align*}
    }
    
    where the last equality holds because $z_{k, i, n} = 0$ for $k\geq 3$, $z_{1, n}=1$ and $z_{k, n}=0$ for $k\geq 2$.
    Similarly, with an induction on $\eta$ from $n-1$ to $0$, we can prove that for every $\eta$
    \fullOnly{
    \begin{align*}
        \sum_{i=1}^{n} T(x_i) = \sum_{i=1}^{\eta} \left(T(x_i)\cdot \left(\sum_{k=1}^{\infty} {z_{k, i, \eta+1}} \cdot (\frac{2}{5\lambda})^{k-1}\right)\right) + \sum_{i=\eta+1}^{n} \left(\frac{10 C \, \mathrm{deg}(x_i)}{\lambda} \cdot \left(\sum_{k=1}^{\infty} z_{k, i} \cdot (\frac{2}{5\lambda})^{k-1}\right)\right).
    \end{align*}
    }
    \shortOnly{
    \begin{align*}
        & &&\sum_{i=1}^{n} T(x_i) \\
        &=&& \sum_{i=1}^{\eta} \left(T(x_i)\cdot \left(\sum_{k=1}^{\infty} {z_{k, i, \eta+1}} \cdot (\frac{2}{5\lambda})^{k-1}\right)\right) \\
        & &&+ \sum_{i=\eta+1}^{n} \left(\frac{10 C \, \mathrm{deg}(x_i)}{\lambda} \cdot \left(\sum_{k=1}^{\infty} z_{k, i} \cdot (\frac{2}{5\lambda})^{k-1}\right)\right).
    \end{align*}
    }
   The base case is $\eta=n-1$, which was shown above. For the inductive step, we iteratively remove $T(x_\eta)$ from the left hand side of the above, replacing it with its recursive definition in terms of $T(x_1)$ to $T(x_{\eta-1})$. To analyze this, let us first observe that for any $\eta$ and any $i\leq \eta-1$, we can write 
   \begin{align*}
       z_{k, i, \eta} = z_{k, i, \eta+1} + z_{k-1, \eta, \eta+1} \cdot \mathbf{1}_{(x_i, x_\eta)\in E(G)},
   \end{align*}
    where $\mathbf{1}_{(x_i, x_\eta)\in E(G)}$ is the indicator variable for whether the edge $(x_i, x_\eta)$ is in the graph $G$. Note also that $z_{k, \eta, \eta+1} = z_{k, \eta}$. Using these, for the inductive step, we have:
    \shortOnly{
       \begin{align*}
        & && &&&\sum_{i=1}^{n} T(x_i) \\
        &= && &&&\sum_{i=1}^{\eta} \left(T(x_i)\cdot \left(\sum_{k=1}^{\infty} {z_{k, i, \eta+1}} \cdot (\frac{2}{5\lambda})^{k-1}\right)\right) \\
        & &&+ &&&\sum_{i=\eta+1}^{n} \left(\frac{10 C \, \mathrm{deg}(x_i)}{\lambda} \cdot \left(\sum_{k=1}^{\infty} z_{k, i} \cdot (\frac{2}{5\lambda})^{k-1}\right)\right) \\
        &= && &&&\sum_{i=1}^{\eta-1} \left(T(x_i)\cdot \left(\sum_{k=1}^{\infty} {z_{k, i, \eta+1}} \cdot (\frac{2}{5\lambda})^{k-1}\right)\right) \\
        & &&+ &&&T(x_\eta) \cdot \left(\sum_{k=1}^{\infty} {z_{k, \eta, \eta+1}} \cdot (\frac{2}{5\lambda})^{k-1}\right) \\ 
        & &&+  &&&\sum_{i=\eta+1}^{n} \left(\frac{10 C \, \mathrm{deg}(x_i)}{\lambda} \cdot \left(\sum_{k=1}^{\infty} z_{k, i} \cdot (\frac{2}{5\lambda})^{k-1}\right)\right) \\
        &=&& &&&
        \sum_{i=1}^{\eta-1} \left(T(x_i)\cdot \left(\sum_{k=1}^{\infty} {z_{k, i, \eta+1}} \cdot (\frac{2}{5\lambda})^{k-1}\right)\right) \\
        & &&+  &&&\left(\frac{2}{5\lambda}\cdot\sum_{j < \eta, (x_j, x_\eta) \in E(G)} T({x_j}) + \frac{10 C \, \mathrm{deg}(x_\eta)}{\lambda}\right) \\
        & && &&&\cdot\left(\sum_{k=1}^{\infty} {z_{k, \eta, \eta+1}} \cdot (\frac{2}{5\lambda})^{k-1}\right) \\ 
        & &&+  &&&\sum_{i=\eta+1}^{n} \left(\frac{10 C \, \mathrm{deg}(x_i)}{\lambda} \cdot \left(\sum_{k=1}^{\infty} z_{k, i} \cdot (\frac{2}{5\lambda})^{k-1}\right)\right) \\
        &=&& 
        &&&\sum_{i=1}^{\eta-1} \left(T(x_i)\cdot \left(\sum_{k=1}^{\infty} {z_{k, i, \eta+1}} \cdot (\frac{2}{5\lambda})^{k-1}\right)\right) \\
        & &&+  &&&\left(\frac{2}{5\lambda}\cdot\sum_{j < \eta, (x_j, x_\eta) \in E(G)} T({x_j}) \right) 
        \\
        & && &&&\cdot \left(\sum_{k=1}^{\infty} {z_{k, \eta, \eta+1}} \cdot (\frac{2}{5\lambda})^{k-1}\right) \\ 
        & &&+ &&&\left(\frac{10 C \, \mathrm{deg}(x_\eta)}{\lambda}\right)\cdot \left(\sum_{k=1}^{\infty} {z_{k, \eta}} \cdot (\frac{2}{5\lambda})^{k-1}\right) \\
        & &&+ &&& \sum_{i=\eta+1}^{n} \left(\frac{10 C \, \mathrm{deg}(x_i)}{\lambda} \cdot \left(\sum_{k=1}^{\infty} z_{k, i} \cdot (\frac{2}{5\lambda})^{k-1}\right)\right) \\
        & =&& &&&\sum_{i=1}^{\eta-1} \left(T(x_i)\cdot \left(\sum_{k=1}^{\infty} {z_{k, i, \eta}} \cdot (\frac{2}{5\lambda})^{k-1}\right)\right) \\ & &&+ &&&\sum_{i=\eta}^{n} \left(\frac{10 C \, \mathrm{deg}(x_i)}{\lambda} \cdot \left(\sum_{k=1}^{\infty} z_{k, i} \cdot (\frac{2}{5\lambda})^{k-1}\right)\right),
    \end{align*}
    }
    \fullOnly{
       \begin{align*}
        & && \sum_{i=1}^{n} T(x_i) \\
        &= && \sum_{i=1}^{\eta} \left(T(x_i)\cdot \left(\sum_{k=1}^{\infty} {z_{k, i, \eta+1}} \cdot (\frac{2}{5\lambda})^{k-1}\right)\right) 
       + \sum_{i=\eta+1}^{n} \left(\frac{10 C \, \mathrm{deg}(x_i)}{\lambda} \cdot \left(\sum_{k=1}^{\infty} z_{k, i} \cdot (\frac{2}{5\lambda})^{k-1}\right)\right) \\
        &= && \sum_{i=1}^{\eta-1} \left(T(x_i)\cdot \left(\sum_{k=1}^{\infty} {z_{k, i, \eta+1}} \cdot (\frac{2}{5\lambda})^{k-1}\right)\right) \\
        & &&+ T(x_\eta) \cdot \left(\sum_{k=1}^{\infty} {z_{k, \eta, \eta+1}} \cdot (\frac{2}{5\lambda})^{k-1}\right) \\ 
        & &&+  \sum_{i=\eta+1}^{n} \left(\frac{10 C \, \mathrm{deg}(x_i)}{\lambda} \cdot \left(\sum_{k=1}^{\infty} z_{k, i} \cdot (\frac{2}{5\lambda})^{k-1}\right)\right) \\
        &=&& 
        \sum_{i=1}^{\eta-1} \left(T(x_i)\cdot \left(\sum_{k=1}^{\infty} {z_{k, i, \eta+1}} \cdot (\frac{2}{5\lambda})^{k-1}\right)\right) \\
        & &&+  \left(\frac{2}{5\lambda}\cdot\sum_{j < \eta, (x_j, x_\eta) \in E(G)} T({x_j}) + \frac{10 C \, \mathrm{deg}(x_\eta)}{\lambda}\right) 
        \cdot\left(\sum_{k=1}^{\infty} {z_{k, \eta, \eta+1}} \cdot (\frac{2}{5\lambda})^{k-1}\right) \\ 
        & &&+  \sum_{i=\eta+1}^{n} \left(\frac{10 C \, \mathrm{deg}(x_i)}{\lambda} \cdot \left(\sum_{k=1}^{\infty} z_{k, i} \cdot (\frac{2}{5\lambda})^{k-1}\right)\right) \\
        &=&& 
        \sum_{i=1}^{\eta-1} \left(T(x_i)\cdot \left(\sum_{k=1}^{\infty} {z_{k, i, \eta+1}} \cdot (\frac{2}{5\lambda})^{k-1}\right)\right) \\
        & &&+ \left(\frac{2}{5\lambda}\cdot\sum_{j < \eta, (x_j, x_\eta) \in E(G)} T({x_j}) \right) 
        \cdot \left(\sum_{k=1}^{\infty} {z_{k, \eta, \eta+1}} \cdot (\frac{2}{5\lambda})^{k-1}\right) \\ 
        & &&+ \left(\frac{10 C \, \mathrm{deg}(x_\eta)}{\lambda}\right)\cdot \left(\sum_{k=1}^{\infty} {z_{k, \eta}} \cdot (\frac{2}{5\lambda})^{k-1}\right) \\
        & &&+  \sum_{i=\eta+1}^{n} \left(\frac{10 C \, \mathrm{deg}(x_i)}{\lambda} \cdot \left(\sum_{k=1}^{\infty} z_{k, i} \cdot (\frac{2}{5\lambda})^{k-1}\right)\right) \\
        & =&& \sum_{i=1}^{\eta-1} \left(T(x_i)\cdot \left(\sum_{k=1}^{\infty} {z_{k, i, \eta}} \cdot (\frac{2}{5\lambda})^{k-1}\right)\right) + \sum_{i=\eta}^{n} \left(\frac{10 C \, \mathrm{deg}(x_i)}{\lambda} \cdot \left(\sum_{k=1}^{\infty} z_{k, i} \cdot (\frac{2}{5\lambda})^{k-1}\right)\right),
    \end{align*}
    }
    which completes the induction proof.
    
    \bigskip
    \noindent From this now proven equality 
    \shortOnly{
    \begin{align*}
        & &&\sum_{i=1}^{n} T(x_i) \\
        & = &&\sum_{i=1}^{\eta} \left(T(x_i)\cdot \left(\sum_{k=1}^{\infty} {z_{k, i, \eta+1}} \cdot (\frac{2}{5\lambda})^{k-1}\right)\right) \\ & &&+ \sum_{i=\eta+1}^{n} \left(\frac{10 C \, \mathrm{deg}(x_i)}{\lambda} \cdot \left(\sum_{k=1}^{\infty} z_{k, i} \cdot (\frac{2}{5\lambda})^{k-1}\right)\right),
    \end{align*}
    }
    \fullOnly{
    \begin{align*}
        \sum_{i=1}^{n} T(x_i) 
        = \sum_{i=1}^{\eta} \left(T(x_i)\cdot \left(\sum_{k=1}^{\infty} {z_{k, i, \eta+1}} \cdot (\frac{2}{5\lambda})^{k-1}\right)\right) + \sum_{i=\eta+1}^{n} \left(\frac{10 C \, \mathrm{deg}(x_i)}{\lambda} \cdot \left(\sum_{k=1}^{\infty} z_{k, i} \cdot (\frac{2}{5\lambda})^{k-1}\right)\right),
    \end{align*}
    }
    by setting $\eta=0$, we get 
     \begin{align*}
        \sum_{i=1}^{n} T(x_i) = \sum_{i=1}^{n} \left(\frac{10 C \, \mathrm{deg}(x_i)}{\lambda} \cdot \left(\sum_{k=1}^{\infty} z_{k, i} \cdot (\frac{2}{5\lambda})^{k-1}\right)\right).
    \end{align*}
    %
    %
    We next analyze this summation. We use the fact that $z_{k, i}\leq (2\lambda)^{k-1}$ for each $i$ and $k$, since each node $x_j$ has at most $2\lambda$ neighbors $x_{j'}$ for $j'>j$. Thus we have:

    \begin{align*}
        \sum_{i=1}^{n} T(x_i) &= && \sum_{i=1}^{n} \left(\frac{10 C \, \mathrm{deg}(x_n)}{\lambda} \cdot \left(\sum_{k=1}^{\infty} z_{k, i} \cdot (\frac{2}{5\lambda})^{k-1}\right)\right)\\
        &\leq &&\sum_{i=1}^{n} \frac{10 C \mathrm{deg}(x_i)}{\lambda} \sum_{k=1}^{\infty} (\frac{2}{5\lambda})^{k - 1} (2\lambda)^{k - 1} \\
        &= &&\sum_{i=1}^{n} \frac{10 C \mathrm{deg}(x_i)}{\lambda} \sum_{k=1}^{\infty} (\frac{4}{5})^{k - 1} \\
    &= &&\sum_{i=1}^{n} \frac{50C \mathrm{deg}(x_i)}{\lambda} \\
    &\leq&& \frac{50C}{\lambda} (2n\lambda)= O(n) 
    \end{align*}
    

   \end{proof}

  \begin{lemma}
  \label[lemma]{lem:smallCase}
  Suppose that $G$ is a graph with arboricity at most $\lambda$, and let $X = (x_1, \cdots, x_n)$ be the order of the vertices according to the time of removal in the sequential peeling process. Then, when we call the fortified process $F(x_i)$ on each node $x_i$, the process terminates with probability $1$ and in expected time at most $T(x_i)$, where $T()$ is defined in \Cref{lem:recursiveT} and satisfies $\sum_{i \in X} T(x_i) = O(n)$.        
  \end{lemma}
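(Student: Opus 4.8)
\noindent The plan is to prove both conclusions simultaneously by strong induction on $i$, the rank of $x_i$ in the peeling order $X$. Fix $i$ and assume the statement for every $x_j$ with $j<i$. Inside a run of $F(x_i)$, call one single-shot call $H_{t,k}(x_i)$ \emph{good} if the sampled set $\Gamma(x_i)$ contains no neighbour $x_{j'}$ of $x_i$ with $j'>i$. Since $x_i$ has at most $2\lambda$ such neighbours (by the outdegree observation for the peeling orientation) and each is kept independently with probability $\tfrac1{25\lambda}$, Bernoulli's inequality gives $\Pr[\text{good}]\ge 1-\tfrac{2\lambda}{25\lambda}=\tfrac{23}{25}=:p>\tfrac12$. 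A good call recurses only into processes $F(x_{j'})$ with $j'<i$, so the induction hypothesis applies to each of them, and since the event ``good'' is independent of which lower-indexed neighbours are sampled, I would combine the induction hypothesis with \Cref{lem:sampling} (to charge only $O(\deg(x_i)/\lambda)$ for building $\Gamma(x_i)$) to get, for a universal constant $c_0$,
\[
\mathbb{E}\big[\text{running time of } H_{t,k}(x_i)\ \big|\ \text{good}\big]\ \le\ \tfrac1{25\lambda}\!\!\sum_{\substack{j<i\\(x_j,x_i)\in E}}\!\! T(x_j)\ +\ c_0\,\tfrac{\deg(x_i)}{\lambda}\ =:\ M .
\]

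\noindent For termination with probability $1$: batch $t$ finishes once at least $t$ of its $2t-1$ calls finish, and it suffices that at least $t$ of them be good. The number of good calls in batch $t$ stochastically dominates $\mathrm{Bin}(2t-1,p)$ with $p>\tfrac12$, so I would bound the binomial lower tail term by term, $\Pr[\text{batch }t\text{ never terminates}]\le\Pr[\mathrm{Bin}(2t-1,p)<t]\le\tfrac{t}{2p}\big(4p(1-p)\big)^t$, which tends to $0$ because $4p(1-p)=\tfrac{184}{625}<1$; as batches use independent randomness, almost surely some batch terminates. Under the scheduling rule a batch needing $\rho$ rounds terminates at counter value $2^{t-1}\rho$, so $F(x_i)$ halts at the finite counter $y^\ast=\min_t 2^{t-1}\rho_t$; moreover after the counter reaches $y$ batch $t$ has completed exactly $\lfloor y/2^{t-1}\rfloor$ rounds of $2t-1$ steps each, whence the total number of steps ever taken is at most $\sum_{t\ge1}(2t-1)\,y/2^{t-1}=6y$. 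Hence the running time of $F(x_i)$ is at most $6y^\ast$.

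\noindent For the expected running time, let $r^\ast$ be the first batch containing at least $r^\ast$ good calls (a.s.\ finite by the tail bound). Batch $r^\ast$ terminates, so $y^\ast\le 2^{r^\ast-1}\rho_{r^\ast}$, and $\rho_{r^\ast}$ is at most the $r^\ast$-th smallest finishing time among the $W_{r^\ast}\ge r^\ast$ good calls of that batch, which is at most $\tfrac1{W_{r^\ast}-r^\ast+1}$ times the \emph{sum} of those finishing times. Conditioning on the family of ``good''-indicators of all calls --- which determine $r^\ast$ and $W_{r^\ast}$ while leaving the good calls' finishing times conditionally independent with conditional mean $\le M$ --- I would obtain
\[
\mathbb{E}\big[\text{running time of } F(x_i)\big]\ \le\ 6M\cdot\mathbb{E}\!\left[\frac{2^{r^\ast-1}\,W_{r^\ast}}{W_{r^\ast}-r^\ast+1}\right].
\]
Then I would show the last expectation is below $\tfrac53$: the $r^\ast=1$ term equals $1$, and for $r\ge2$ one uses $\Pr[r^\ast=r]\le\prod_{s<r}\Pr[\text{batch }s\text{ not good}]$ with $\Pr[\text{batch }1\text{ not good}]\le\tfrac2{25}$ and $\Pr[\text{batch }s\text{ not good}]\le\tfrac{s}{2p}\big(4p(1-p)\big)^s$, together with $\tfrac{W_r}{W_r-r+1}\le r$, so that $\sum_{r\ge2}2^{r-1}r\,\Pr[r^\ast=r]$ is tiny (the largest correction, from $r=2$, being at most $4\cdot\tfrac2{25}$). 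This gives $\mathbb{E}[\text{running time of }F(x_i)]\le10M=\tfrac2{5\lambda}\sum_{j<i,(x_j,x_i)\in E}T(x_j)+10c_0\,\tfrac{\deg(x_i)}{\lambda}\le T(x_i)$ once the constant $C$ of \Cref{lem:recursiveT} is taken $\ge c_0$, closing the induction; the bound $\sum_{x_i\in X}T(x_i)=O(n)$ is then exactly \Cref{lem:recursiveT}.

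\noindent The hard part will be this final estimate: the exponential scheduling costs a factor $2^{r^\ast-1}$ and one additionally pays for waiting until the first ``good'' batch, yet the product of these overheads must stay below $10$, so that, divided by the sampling constant $25$, it does not exceed the $\tfrac25$ in the recursion defining $T$. This is precisely why the single-shot process samples with the small probability $\tfrac1{25\lambda}$ and why each batch runs $2t-1$ copies with a strict-majority rule --- this forces the ``not good'' probability to decay like $(4p(1-p))^t$, fast enough to beat $2^t$, which a plain Chernoff/Hoeffding bound would \emph{not} deliver --- and why one must use the order-statistic bound $\rho_{r^\ast}\le\frac{1}{W_{r^\ast}-r^\ast+1}\sum(\cdot)$ rather than the cruder $\rho_{r^\ast}\le\sum(\cdot)$.
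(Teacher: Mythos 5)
Your proposal is correct and follows essentially the same route as the paper: the same notion of a ``good'' single-shot call (no sampled higher-index neighbour, probability at least $1-\tfrac{2}{25}$), the same first-majority-good-batch random variable ($r^\ast$, called $p$ in the paper), the same $6y$ accounting for the staggered scheduling, and the same geometric summation yielding $\mathbb{E}[\text{runtime of }F(x_i)]\le 10\,\mathbb{E}[\text{runtime of }H(x_i)\mid\text{good}]$, closed by the identical induction against $T(\cdot)$. Your order-statistic bound $\rho_{r^\ast}\le\frac{1}{W_{r^\ast}-r^\ast+1}\sum(\cdot)$ and the explicit binomial tail bound for almost-sure termination are slight refinements of the paper's cruder ``sum of $t$ good finishing times'' bound, but they lead to the same constants and do not constitute a different argument.
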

  \begin{proof} The proof has two parts, which we present separately.
  
  \paragraph{Part I} First, let us consider one call of the simpler single-shot process $H(x_i)$. Let us say $H(x_i)$ has a \textit{bad sample} if there exists at least one sampled neighbor $u$ of $x_i$ such that $u\notin \bigcup_{j=1}^{i-1} \{x_j\}$. Since $x_i$ has at most $2\lambda$ neighbors that are not in $\bigcup_{j=1}^{i-1} \{x_j\}$, the probability of a bad sample is at most $\frac{2\lambda}{25\lambda} \leq 0.1$. 
  Let $Z(x_i) := \mathbb{E}[\textit{runtime of } H(x_i) | H(x_i)\textit{does not have a bad sample}]$. We argue that $$\mathbb{E}[\textit{runtime of } F(x_i)] \leq 10 Z(x_i).$$
  In particular, we bound the expected running time of $F(x_i)$ by considering the smallest batch index $p$ such that less than half of the recursive calls $H_{p,.}(x_i)$ in that batch have a bad sampling. Since each call is bad with probability at most $0.1$, the probability that $p = t$ is at most $0.1^{t - 1}$. On the other hand, the expected running time of $F$ conditioned on $p = t$ is upper bounded by 
  \begin{align*} (t 2^{t - 1} \sum_{j = 1}^{\infty} \frac{2j - 1}{2^{j-1}}) \cdot Z(x_i) = (3t \cdot 2^t) \cdot Z(x
_i).
  \end{align*} This is because of our scheduling rule and regardless of the random samples in batches $1$ to $p-1$. Concretely, the expected total running time of $t$ single-shot peeling processes $H(x_i)$, conditioned on them not having bad samples, is $t Z(x_i)$. Thus, the expected value of the counter $y$ when batch $p$ terminates is upper bounded by $t2^{t-1} Z(x_i)$. At the counter value $y$, the number of steps we have taken in the $2j-1$ calls of batch $j$, for each $j$, is at most $y \cdot (2j-1)/2^{j-1}$. Thus, over all $j$, the number of steps is at most $(t 2^{t - 1} \sum_{j = 1}^{\infty} \frac{2j - 1}{2^{j-1}}) \cdot Z(x_i) = (3t \cdot 2^t) \cdot Z(x_i)$. Using this and that the probability of $p=t$ is at most $(0.1)^{t-1}$, we conclude
  \begin{align}
  \label{lem:ineqFtoZ}
        \mathbb{E}[\textit{runtime of } F(x_i)] \leq \sum_{t=1}^{\infty} (0.1)^{t-1} (3t \cdot 2^t) \cdot Z(x_i) \leq 10 Z(x_i). \tag{**}
  \end{align}

    \paragraph{Part II} Using part I and concretely \Cref{lem:ineqFtoZ}, and with the help of an induction on $i$ (in particular for node $x_i\in (x_1, \cdots, x_n)$), we argue that the fortified process $F(x_i)$ terminates with probability $1$ and $\mathbb{E}[\textit{runtime of } F(x_i)] \leq T(x_i),$ for the recursive function $T()$ defined in \Cref{lem:recursiveT}.
    
    Notice that we have 
  \begin{align*} & &&Z(x_i)\\
  & := &&\mathbb{E}[\textit{runtime of } H(x_i) | H(x_i)\textit{doesn't have a bad sample}] \\ 
  &\leq &&\frac{1}{25 \lambda} \sum_{j < i, (x_j, x_i) \in E(G)} \mathbb{E}[\textit{runtime of }F(x_j)] + \frac{C\, \mathrm{deg}(x_i)}{\lambda}.
  \end{align*} The first term with summation over $j<i$ is because we make a recursive fortified process call $F(x_j)$ on each neighbor $j<i$ of $x_i$ with probability $1/(25\lambda)$, and no call on $x_{j'}$ for $j'\geq i$ given the condition that $H(x_i)$ has a good sample.
  The second term $\frac{C\,\mathrm{deg}(x_i)}{\lambda}$ is the expected query complexity to sample the neighbors of $x_i$. Furthermore, $\frac{C\,\mathrm{deg}(x_i)}{\lambda}$ is also an upper bound on the expected time complexity to sample the neighbors, using an efficient sampling implementation. See \Cref{lem:sampling} for the details of such an efficient sampling implementation. 
  
  For the base case of $x_1$, as a result of the above and \Cref{lem:ineqFtoZ}, we have
  $\mathbb{E}[\textit{runtime of } F(x_i)] \leq 10Z(x_1) \leq 10 C deg(x_1)/\lambda = T(x_1)$. This proves the base case of our induction. For the inductive step, let us assume that $\mathbb{E}[\textit{runtime of }F(x_j)]\leq T(x_j)$ for all $j<i$. Using \Cref{lem:ineqFtoZ} again, we conclude
    \shortOnly{
    \begin{align*}
    & &&\mathbb{E}[\textit{runtime of } F(x_i)] \\
    &\leq && 10Z(x_i) \\ 
    &\leq &&10 \Big(\frac{1}{25 \lambda} \sum_{j < i, (x_j, x_i) \in E(G)} \mathbb{E}[\textit{runtime of }F(x_j)] \\
    & &&+ \frac{C\, \mathrm{deg}(x_i)}{\lambda}\Big) \\
    &\leq &&10 \big(\frac{1}{25 \lambda} \sum_{j < i, (x_j, x_i) \in E(G)} T(x_j) + \frac{C\,\mathrm{deg}(x_i)}{\lambda} \big) \\ 
    &= &&\frac{2}{5\lambda} \sum_{j < i, (x_j, x_i) \in E(G)} T({x_j}) + \frac{10C \,\mathrm{deg}(x_i)}{\lambda}  = T(x_i), 
    \end{align*}
    }
    \fullOnly{
     \begin{align*}
    & &&\mathbb{E}[\textit{runtime of } F(x_i)] \\
    &\leq && 10Z(x_i) \\ 
    &\leq &&10 \Big(\frac{1}{25 \lambda} \sum_{j < i, (x_j, x_i) \in E(G)} \mathbb{E}[\textit{runtime of }F(x_j)] 
    + \frac{C\, \mathrm{deg}(x_i)}{\lambda}\Big) \\
    &\leq &&10 \big(\frac{1}{25 \lambda} \sum_{j < i, (x_j, x_i) \in E(G)} T(x_j) + \frac{C\,\mathrm{deg}(x_i)}{\lambda} \big) \\ 
    &= &&\frac{2}{5\lambda} \sum_{j < i, (x_j, x_i) \in E(G)} T({x_j}) + \frac{10C \,\mathrm{deg}(x_i)}{\lambda} \\
    &=&& T(x_i), 
    \end{align*}
    }
    where the penultimate inequality used the inductive hypothesis. This concludes the proof.
\end{proof}

\begin{corollary}\label[corollary]{crl:smallCase}
If $\lambda(G)\leq \lambda$, then in each test, with probability at least $0.9$, the result will be YES, i.e., all $F(v)$ processes for the sampled vertices $v$ terminate within query complexity budget $10S/\lambda$.    
\end{corollary}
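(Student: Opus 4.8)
The plan is to combine \Cref{lem:smallCase} with a simple linearity-of-expectation argument over the sampled vertices and then a Markov-inequality step to convert the expected running time bound into a high-probability statement for a single test. First I would recall the setup: a test samples each vertex independently with probability $\frac{1}{\lambda}$ and runs $F(s)$ on each sampled vertex $s$; the test outputs YES iff all these processes terminate within a total of $\frac{10S}{\lambda}$ time steps, where $S = \sum_{v\in V(G)} T(v)$. By \Cref{lem:smallCase}, when $\lambda(G)\leq\lambda$ each $F(x_i)$ terminates with probability $1$, so the only way the test can fail (output NO) is by exceeding the $\frac{10S}{\lambda}$ time budget.

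Next I would bound the expected total running time of a test. Let $B\subseteq V$ be the (random) set of sampled vertices, and let $R(v)$ denote the actual running time of $F(v)$. Conditioned on $v\in B$, \Cref{lem:smallCase} gives $\mathbb{E}[R(v)\mid v\in B]\leq T(v)$ (the sampling of $v$ into $B$ is independent of the randomness internal to $F(v)$, which is what drives $R(v)$). Hence
\begin{align*}
\mathbb{E}\Big[\sum_{v\in B} R(v)\Big] = \sum_{v\in V(G)} \Pr[v\in B]\,\mathbb{E}[R(v)\mid v\in B] \leq \frac{1}{\lambda}\sum_{v\in V(G)} T(v) = \frac{S}{\lambda}.
\end{align*}
Then by Markov's inequality, $\Pr\big[\sum_{v\in B} R(v) > \frac{10S}{\lambda}\big] \leq \frac{1}{10}$, so with probability at least $\frac{9}{10}$ the test completes within the budget and outputs YES. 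One minor point to address is that the algorithm aborts a test as soon as the time budget is exhausted, so strictly speaking $\sum_{v\in B} R(v)$ may not be fully "run"; but this only helps — the event that the test would have needed more than $\frac{10S}{\lambda}$ steps is exactly the complement of the YES event, and that event has probability at most $\frac{1}{10}$ by the displayed Markov bound applied to the unaborted process.

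The main obstacle — really the only subtlety — is justifying that the per-vertex conditional expectation bound $\mathbb{E}[R(v)\mid v\in B]\leq T(v)$ from \Cref{lem:smallCase} can be summed cleanly despite the recursive structure: a single $F(v)$ call itself triggers further $F(\cdot)$ calls on other vertices, so the running times $R(v)$ for $v\in B$ are not independent and in principle work could be "double counted" across different sampled roots. However, this is not actually a problem for the expectation bound, because $T(v)$ as defined in \Cref{lem:recursiveT} already accounts, in its recursive definition, for the full cost of the subtree of recursive calls spawned by $F(v)$, and \Cref{lem:smallCase} bounds the \emph{total} expected runtime of $F(v)$ (including all its recursive descendants) by $T(v)$; linearity of expectation then applies to the sum over roots $v\in B$ regardless of dependence. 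So the argument goes through, and I would simply state the three steps (termination with probability $1$, expected-total-time bound via linearity, Markov to get probability $\tfrac{9}{10}$) in that order, concluding that the test says YES with probability at least $0.9$.
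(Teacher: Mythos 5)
Your proposal is correct and follows essentially the same route as the paper's proof: termination with probability $1$ from \Cref{lem:smallCase}, an expected total runtime of $S/\lambda$ via linearity of expectation over the sampled vertices, and Markov's inequality to obtain the $0.9$ success probability. The extra care you take with the independence of the vertex sampling from the internal randomness of $F(\cdot)$ and with the non-double-counting of recursive work is implicit in the paper's (terser) argument but is correctly resolved exactly as you describe.
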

\begin{proof}
    By \Cref{lem:smallCase}, in this case each call to $F(v)$ terminates with probability $1$. Furthermore, since we sample each node with probability $1/\lambda$, and as the process on vertex $x_i$ terminates in expected time at most $T(x_i)$, for which we have $S=\sum_{i \in X} T(x_i) = O(n)$, the expected time (and thus also query complexity) for all calls to $F()$ to terminate is at most $S/\lambda$. Hence, with probability at least $0.9$, all the calls terminate within the budget of $10S/\lambda$.
\end{proof}   

\subsubsection{High-Arboricity Case}
We argue here that if $\lambda(G)\geq 120 \lambda$, then in each test, with probability at least $0.6$, the result will be NO, i.e., some processes will not terminate. The main ingredient is the following lemma.
\begin{lemma}
\label[lemma]{lem:largeCase}
 Suppose that $G$ is a graph with arboricity at least $120\lambda$. Let $G[U]$ be an induced subgraph with a minimum degree at least $120 \lambda$. For any $v \in U$,  with probability at least $0.8$, the fortified process $F(v)$ will never terminate. 
 \end{lemma}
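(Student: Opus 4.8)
The plan is to ignore the scheduling rule entirely---it governs only the running time analyzed in the low-arboricity case and is irrelevant to whether $F(v)$ ever halts---and instead exploit the fact that termination is a \emph{monotone} event witnessed at some finite recursion depth. For $d\ge 0$ define recursively the events ``$H(u)$ terminates within depth $d$'' and ``$F(u)$ terminates within depth $d$'': $H(u)$ terminates within depth $0$ iff $\Gamma(u)=\emptyset$; for $d\ge 1$, $H(u)$ terminates within depth $d$ iff $F(w)$ terminates within depth $d-1$ for every $w\in\Gamma(u)$; and $F(u)$ terminates within depth $d$ iff for some batch $t$ more than half of the calls $H_{t,1}(u),\dots,H_{t,2t-1}(u)$ terminate within depth $d$. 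These events are increasing in $d$, and since an actual termination of $F(v)$ uses finitely many queries and hence explores a finite recursion tree, $\{F(v)\text{ terminates}\}=\bigcup_{d\ge 0}\{F(v)\text{ terminates within depth }d\}$. Thus it suffices to prove, by induction on $d$, that
\[
\sup_{u\in U}\ \Pr[\,F(u)\text{ terminates within depth }d\,]\ \le\ \tfrac15 ,
\]
and then let $d\to\infty$ by monotone convergence; this yields $\Pr[F(v)\text{ never terminates}]\ge 1-\tfrac15\ge 0.8$.

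For the inductive step, fix $u\in U$, assume the bound at level $d-1$, and examine one call $H_{t,j}(u)$. Let $W=\Gamma(u)\cap U$. Since $H_{t,j}(u)$ can terminate within depth $d$ only if $F(w)$ terminates within depth $d-1$ for \emph{every} sampled neighbor, in particular for every $w\in W$, and since each recursive call $F(w)$ runs on fresh independent randomness (independent also of the sampling that produced $\Gamma(u)$), we may condition on $W$ and multiply, getting $\Pr[H_{t,j}(u)\text{ terminates within depth }d\mid W]\le(1/5)^{|W|}$ from the inductive hypothesis. Averaging over $W$, which is $\mathrm{Binomial}$ on $|N(u)\cap U|\ge 120\lambda$ trials with parameter $\tfrac1{25\lambda}$,
\[
\Pr[H_{t,j}(u)\text{ terminates within depth }d]\ \le\ \Big(1-\tfrac45\cdot\tfrac1{25\lambda}\Big)^{120\lambda}\ \le\ e^{-3.84}\ =:\ \gamma .
\]
(For $d=0$ the same call terminates with probability at most $(1-\tfrac1{25\lambda})^{120\lambda}\le e^{-4.8}<\gamma$, so the base case is covered by the same estimate.) The $2t-1$ calls in batch $t$ are mutually independent, so the probability that at least $t$ of them terminate within depth $d$ is at most $\binom{2t-1}{t}\gamma^t\le (4\gamma)^t$. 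Summing the geometric series over all batches,
\[
\Pr[F(u)\text{ terminates within depth }d]\ \le\ \sum_{t\ge1}(4\gamma)^t\ =\ \frac{4\gamma}{1-4\gamma}\ <\ \tfrac15 ,
\]
because $4\gamma=4e^{-3.84}<0.1$. This closes the induction, and taking $d\to\infty$ proves the lemma. (As a byproduct this also establishes the single-shot guarantee (II): a single $H(u)$ never terminates with probability at least $0.9$.)

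I expect the main obstacle to be the probabilistic bookkeeping rather than any individual inequality. One must argue rigorously that, conditioned on the sampled neighbor set $\Gamma(u)$ of a single-shot call, the depth-$(d-1)$ truncated termination events of the recursive calls $F(w)$, $w\in\Gamma(u)$, are mutually independent and independent of the sampling---this is exactly what the ``independent randomness'' in the definitions of $H$ and $F$ provides---and then that the union over the infinitely many batches, together with the limit $d\to\infty$, genuinely recovers the event $\{F(v)\text{ terminates}\}$ (using monotonicity of the truncated events and finiteness of a halting computation's recursion tree). A secondary point is that the constants are not slack-free: the argument requires $4\gamma$ well below $1$---in fact below roughly $1/5$---so that the per-batch geometric series stays under $1/5$, which is precisely why the hypotheses pit minimum degree $120\lambda$ against sampling rate $\tfrac1{25\lambda}$ and why the batches grow (the ``strict majority of $2t-1$'' rule produces the factor $\binom{2t-1}{t}\le 4^t$ that must be dominated by $\gamma^t$).
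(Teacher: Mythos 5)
Your proof is correct and follows essentially the same route as the paper's: an induction on the truncated recursion depth, with the per-batch majority event bounded by $\binom{2t-1}{t}q^{t}\le (4q)^{t}$ and summed as a geometric series over batches. The only substantive difference is local: where the paper bounds the single-shot termination probability by noting that at least two core neighbors are sampled with probability at least $0.94$ and then squaring the $0.2$ bound (yielding $0.1$), you compute $\E\bigl[(1/5)^{|W|}\bigr]$ exactly via the binomial generating function, which gives the sharper constant $e^{-3.84}$ and lets you carry a single inductive invariant for $F$ instead of the paper's coupled pair of claims for $F$ and $H$.
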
 
 \begin{proof}

 Starting from one call of $H(v)$, the recursive calls of $F$ and $H$ can be drawn as a recursion tree with one node for each call connected to the other call that invoked it as its parent. The initial call (i.e., the one represented by the root of the tree) terminates if and only if the tree has a finite height. We will prove by induction on the tree height $h$ that for each $v \in U$, (A) the probability that running $F(v)$ ends with a tree of height at most $h$ is at most $0.2$, and (B) the probability that running $H(v)$ ends with a tree of height at most $h$ is at most $0.1$. The base case of $h = 0$ is trivial for both claims. Suppose $h > 0$. We examine claims (A) and (B) separately. 

 \begin{itemize}
     \item (A) For a run of the fortified $F(v)$, it terminates within height $h$ if and only if the majority of recursive calls to $H_{t,.}(v)$ in one of the batches $t$ terminate within height $h - 1$. By the induction assumption and a union bound, the probability that this happens for a certain batch $t$ is at most $2^{2t - 2}0.1^t$, because there are $2^{2t - 2}$ combinations of at least $t$ calls and each combination has probability at most $0.1^t$ to terminate. With a union bound over different values of $t$, we conclude that the probability of $F(v)$ terminating with a tree of height at most $h$ is at most $\sum_{t = 1}^{\infty} 2^{2t - 2}0.1^t < 0.2$.

 \item (B) In the process $H(v)$ run on node $v$, each of its neighbors $u\in N(v) \cap U$, of which there are at least $120\lambda$ many, is sampled for running $F(u)$ with probability $1/25\lambda$. The probability that strictly less than two neighbors are sampled is at most 
 \shortOnly{
 \begin{align*} & &&(1-\frac{1}{25\lambda})^{120\lambda} + 120\lambda \cdot \frac{1}{25\lambda} \cdot (1-\frac{1}{25\lambda})^{119\lambda} \\
 &< && 6 (1-\frac{1}{25\lambda})^{119\lambda} \leq 6 e^{-119/25} < 0.06.
 \end{align*}
 }\fullOnly{
 \begin{align*} (1-\frac{1}{25\lambda})^{120\lambda} + 120\lambda \cdot \frac{1}{25\lambda} \cdot (1-\frac{1}{25\lambda})^{119\lambda} 
 <  6 (1-\frac{1}{25\lambda})^{119\lambda} \leq 6 e^{-119/25} < 0.06.
 \end{align*}
 }
 Thus, with probability at least $0.94$, at least $2$ neighbors are sampled. The probability that both of these sampled neighbors have trees of height at most $h-1$ for their $F(u)$ process is at most $(0.2)^2=0.04$, by induction, and their independence. Hence, the probability of $H(v)$ terminating with a tree of height at most $h$ is at most $0.04+0.06 = 0.1$.
 \end{itemize}
 \end{proof}

 \begin{corollary}
    \label[corollary]{crl:largeCase}
    If $\lambda(G)\geq 120 \lambda$, then in each test, with probability at least $0.6$, the result will be NO, i.e., some processes will not terminate.
\end{corollary}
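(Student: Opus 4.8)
The plan is to combine two independent sources of randomness: the vertex sampling performed by the test, and the internal randomness of the fortified process $F(\cdot)$, where the latter is already controlled by \Cref{lem:largeCase}. First I would record that since $\lambda(G)\ge 120\lambda$, the graph contains a nonempty induced subgraph $G[U]$ with minimum degree at least $120\lambda$; this follows exactly as in the ``core subgraph'' discussion in \Cref{sec:overview} (via the Nash--Williams formula applied to a minimal violating set, or equivalently via $\mathsf{degen}(G)\ge\lambda(G)$), and in particular $|U|\ge 120\lambda+1$ since every vertex of $U$ has at least $120\lambda$ neighbors inside $U$.

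Next I would bound the probability that the test samples at least one vertex of $U$. Since each vertex is included in the sample independently with probability $1/\lambda$, this probability is at least $1-(1-\tfrac1\lambda)^{|U|}\ge 1-(1-\tfrac1\lambda)^{120\lambda}\ge 1-e^{-120}\ge 0.85$, using $1-x\le e^{-x}$. Then I would condition on this event and single out one canonical sampled core vertex — say $v$, the one of smallest index among the sampled vertices of $U$. Because the random bits used inside $F(v)$ are independent of the coin flips that determine which vertices are sampled, conditioning on the event ``$v$ is the smallest-indexed sampled vertex of $U$'' does not change the distribution of $F(v)$'s behavior, so \Cref{lem:largeCase} still applies and gives that $F(v)$ never terminates with probability at least $0.8$. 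Multiplying the two independent bounds, with probability at least $0.85\cdot 0.8 = 0.68\ge 0.6$ the test samples some vertex $v\in U$ on which $F(v)$ never terminates.

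Finally, I would observe that in this event the test cannot possibly finish all of its $F(\cdot)$ processes within the $10S/\lambda$-step budget (a process that never terminates in particular does not terminate within that many steps), so by definition the test outputs NO. This yields the claimed bound of $0.6$.

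I do not expect a genuine obstacle here, as this is essentially a packaging of \Cref{lem:largeCase}. The one point that needs slight care is the independence between the vertex-sampling coins and the process-internal coins, which is exactly what licenses applying \Cref{lem:largeCase} to $v$ after conditioning on which core vertex was selected; choosing a single canonical sampled core vertex (rather than arguing about the whole random set of sampled core vertices at once) avoids any need for a union bound and keeps the conditioning clean.
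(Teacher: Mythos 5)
Your proof is correct and takes essentially the same route as the paper's: lower-bound the probability that the test samples a core vertex by $1-(1-\tfrac{1}{\lambda})^{120\lambda}\geq 0.85$, invoke \Cref{lem:largeCase} for a $0.8$ probability that $F$ never terminates on that vertex, and multiply to get $0.68\geq 0.6$. The extra care you take about the independence between the vertex-sampling coins and the internal randomness of $F$ is a slightly more rigorous elaboration of the same argument, not a different approach.
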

\begin{proof} With probability at least $1-(1 - \frac{1}{\lambda})^{120\lambda}\geq 0.85$, the test samples at least one vertex in the core $U$. By \Cref{lem:largeCase}, for such a sampled core vertex, the fortified process does not terminate on it with probability at least $0.8$. Hence, overall, with probability at least $0.85\cdot 0.8 \geq 0.6$, the process will not terminate and the test result will be NO. 
\end{proof}

\subsubsection{Wrap Up}
\label{subsub:wrap}
\begin{corollary}
If $\lambda(G)\leq \lambda$, the comparator will say YES with high probability, and if $\lambda(G)>120\lambda$, it will say NO, with high probability. 
\end{corollary}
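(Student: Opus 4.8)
The plan is a routine majority-amplification over the $\Theta(\log n)$ independent tests that make up the comparator. Recall that the comparator outputs the majority vote among these tests, and that by construction the tests use mutually independent randomness (and independent vertex samples), and each test is cut off after $10S/\lambda$ steps so it always returns a YES/NO verdict within the stated budget.

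First I would handle the low-arboricity case $\lambda(G)\le\lambda$. By \Cref{crl:smallCase}, each individual test outputs YES with probability at least $0.9$, independently across tests. Writing $N=\Theta(\log n)$ for the number of tests and $Y$ for the number of them that return YES, we have $\E[Y]\ge 0.9N$, so a Chernoff bound gives $\Pr[Y\le N/2]\le e^{-\Omega(N)}$. Choosing the hidden constant in $N=\Theta(\log n)$ large enough makes this at most $1/\poly(n)$, with the polynomial exponent as large as desired; hence the majority verdict is YES with high probability.

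The high-arboricity case $\lambda(G)>120\lambda$ is symmetric. By \Cref{crl:largeCase}, each test outputs NO with probability at least $0.6>1/2$, independently. Letting $Z$ count the tests returning NO, we get $\E[Z]\ge 0.6N$, and again a Chernoff bound yields $\Pr[Z\le N/2]\le e^{-\Omega(N)}\le 1/\poly(n)$ for a suitable $N=\Theta(\log n)$. So the majority verdict is NO with high probability.

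There is no real obstacle here: the whole argument rests only on the fact that the per-test success probabilities of \Cref{crl:smallCase} and \Cref{crl:largeCase} ($0.9$ and $0.6$, respectively) are bounded strictly away from $1/2$, together with the independence of the tests, after which concentration of a binomial finishes the job. The only bookkeeping point is that the comparator's query and time complexity remains $\tilde O(n/\lambda)$, since it is $N=\Theta(\log n)$ times the per-test budget $O(S/\lambda)=\tilde O(n/\lambda)$; this fact, combined with the two probabilistic guarantees above, is exactly what is then invoked to conclude \Cref{thm:Main} by sweeping $\lambda$ over the values $n/2^i$.
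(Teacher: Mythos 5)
Your proposal is correct and matches the paper's own argument: both invoke \Cref{crl:smallCase} and \Cref{crl:largeCase} for the per-test probabilities $0.9$ and $0.6$ and then amplify by a Chernoff bound over the $\Theta(\log n)$ independent tests whose majority the comparator reports. You simply spell out the concentration step that the paper leaves implicit.
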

\begin{proof}
    The comparator consists of $\Theta(\log n)$ independent tests and outputs the majority result. If $\lambda(G)\leq \lambda$, by \Cref{crl:smallCase} each test will say YES with probability at least $0.9$ and thus their majority will say YES with high probability. If $\lambda(G)>120\lambda$, by \Cref{crl:largeCase} each test will say NO with probability at least $0.6$ and thus their majority will say NO with high probability.
\end{proof} 

\begin{proof}[Proof of \Cref{thm:Main}]
The comparator discussed above runs using $\tilde{O}(n/\lambda)$ time and query complexity, with the guarantee that, with high probability, if $\lambda(G)\leq \lambda$, its says YES and if $\lambda(G)>120\lambda$, it say NO. Given this comparator, the $O(1)$ approximation algorithm claimed in \Cref{thm:Main} follows, in $\tilde{O}(n/\lambda(G))$ time and queries, as standard and mentioned in the preliminaries of \Cref{sec:overview}: we just run the comparator for thresholds $\lambda=n/2^i$ for $i=0, 1, 2, \ldots$ until we reach the first value of $\lambda$ that the comparator outputs YES.
\end{proof}

\bibliographystyle{alpha}
\bibliography{refs}

\end{document}